\pgfplotsset{compat=1.3}
\tikzstyle{vertex}=[circle,draw=black,minimum size=8pt,inner sep=1pt]
\tikzstyle{vertex2}=[circle,draw=black,minimum size=15pt,inner sep=2pt]
\tikzstyle{edge}=[]
\tikzstyle{ypath}=[ultra thick]
\tikzstyle{dottedEdge}=[dotted,thick]
\tikzstyle{small-vertex}=[circle,draw=black,minimum size=6pt,inner sep=0pt,fill=white]
\tikzstyle{thinedges}=[draw=gray!30]
 \tikzstyle{boxes}=[draw,thick, rounded corners=3mm,text width=2.7cm,align=center,text opacity=1,fill opacity=1,fill=white]
\tikzstyle{unk}=[fill=gray!25!white]
\tikzstyle{nodest}=[circle,draw,minimum size=0.6cm,inner sep=1pt]
\tikzset{linemark/.style =   {line width= 4pt, color=#1}}
\tikzstyle{normalline} = [line width=.8pt]
\newcommand{\Agent}{\text{Agent}}
\newcommand{\agent}{\text{agent}}
\newtheorem{theorem}{Theorem}[section]
\newtheorem{proposition}[theorem]{Proposition}
\newtheorem{myclaim}{Claim}
\newtheorem{lemma}[theorem]{Lemma}
\theoremstyle{definition}
\newtheorem{example}[theorem]{Example}
\newtheorem{definition}[theorem]{Definition}
\newcommand{\indif}{\ensuremath{\sim}}
\definecolor{dargray}{rgb}{0.18, 0.18, 0.18}
\definecolor{darkgreen}{rgb}{0.01,0.6,0.1}
\definecolor{lightrose}{rgb}{0.996,0.75,0.793}
\definecolor{rose}{cmyk}{0.75, 0.75, 0,0}
\definecolor{winered}{rgb}{0.6,0.1,0.1}
\definecolor{lightyellow}{rgb}{1, 1, 0.6}
\definecolor{darkyellow}{rgb}{.99, .87, 0.04}
\definecolor{transparent}{rgb}{1,1,1}
\definecolor{lightlightgray}{rgb}{0.88, 0.88, 0.88}
\definecolor{lightgray}{rgb}{0.8, 0.8, 0.8}
\definecolor{lightblue}{rgb}{0.527,0.805,0.977}
\definecolor{lightgreen}{rgb}{.74,1,0}
\newcommand{\wred}[1]{{\color{winered}#1}}
\newcommand{\AND}{{\normalfont \textbf{and}}\xspace}
\newcommand{\OR}{{\normalfont \textbf{or}}\xspace}
\newcommand{\CcomputeSM}{\texttt{ComputeStableMatching}}
\newcommand{\CinsertMP}{\texttt{InsertMostPreferred}}
\newcommand{\CsmartD}{\texttt{SmartDelete}}
\crefname{subsection}{Section}{Sections}
\crefname{section}{Section}{Sections}
\crefname{table}{Table}{Tables}
\crefname{figure}{Figure}{Figures}
\crefname{algorithm}{Algorithm}{Algorithms}
\crefname{theorem}{Theorem}{Theorems}
\crefname{conjecture}{Conjecture}{Conjectures}
\crefname{definition}{Definition}{Definitions}
\crefname{corollary}{Corollary}{Corollary}
\crefname{proposition}{Proposition}{Propositions}
\crefname{observation}{Observation}{Observations}
\crefname{lemma}{Lemma}{Lemmas}
\crefname{example}{Example}{Examples}
\crefname{reduction}{Reduction}{Reductions}
\crefname{appendix}{Appendix}{Appendices}
\crefname{claim}{Claim}{Claims}
\crefname{line}{line}{lines}
\crefname{myclaim}{Claim}{Claims}
\newcommand{\Pot}{\mathcal{P}}
\newcommand{\match}[3]{\ensuremath{\{#2, #3\}}}
\newcommand{\SR}{\textsc{Stable Roommates}\xspace}
\newcommand{\tiesc}{tie-sen\-si\-tive single-crossing\xspace}
\newcommand{\tiescness}{tie-sensitive single-crossingness\xspace}
\newcommand{\Tiescness}{Tie-sensitive single-crossingness\xspace}
\newcommand{\border}{\ensuremath{\blacktriangleright}}
\newcommand{\decprob}[3]{
  \begin{center}%
    \begin{minipage}{0.9\linewidth}%
      \textsc{#1}\\[0.2ex]
      \textbf{Input:} #2\\[0.2ex]
      \textbf{Question:} #3
    \end{minipage}%
  \end{center} }
\newcommand{\pointer}{\ensuremath{p}}
\newcommand{\clR}[1]{\textcolor{black}{#1}}
\newcommand{\clB}[1]{\textcolor{black}{#1}}
\newcommand{\clG}[1]{\textcolor{black}{#1}}
\newcommand{\clO}[1]{\textcolor{black}{#1}}
\newenvironment{profile}[1]{
\begin{quote}
$\begin{array}{#1}
}{ 
\end{array}$
\end{quote}
}
\newcommand{\thxtxtA}{A preliminary version of this work appeared in the Proceedings of the 5th International
Conference on Algorithmic Decision Theory~(ADT\ '17)~\cite{BCFN17}, volume 10576 of LNCS, pages 315--330,
Springer, 2017. This full version contains (more) proof details for \cref{prop:relation-sc-tsc}, \cref{prop:tiesless-incomp-sp-sc-np-c}, \cref{thm:complete-ties-nsp-qudratic}, \cref{incomplete-ties-sp-sc:np-c}, and \cref{complete-sc:p}. Furthermore, the reduction used for our main result (\cref{incomplete-ties-sp-sc:np-c}) was replaced by a completely new reduction showing NP-hardness for the case of narcissistic, single-peaked,
and single-crossing preferences (the preferences in the previous reduction were not narcissistic).}
\newcommand{\thxtxtB}{Most of the work was done while all authors were with TU~Berlin,
  with some additional work done while Jiehua Chen was with Ben-Gurion University, Israel, and with University of Warsaw, Poland.}
\title{Stable Roommates with Narcissistic, Single-Peaked, and Single-Crossing Preferences\thanks{\thxtxtA} \thanks{\thxtxtB}}
\author[1]{Robert Bredereck}
\author[2]{Jiehua Chen}
\author[1]{Ugo Paavo Finnendahl}
\author[1]{Rolf Niedermeier}
\affil[1]{
  TU Berlin, Berlin, Germany\protect\\
  \texttt{\{robert.bredereck, finnendahl, rolf.niedermeier\}@tu-berlin.de}
  }
\affil[2]{
  TU Wien, Vienna, Austria\protect\\
  \texttt{jiehua.chen@tuwien.ac.at}}
\date{}
\begin{document}

\maketitle 

\begin{abstract}
  The classical \SR problem is to decide whether there exists
  a matching of an even number of agents such that no two agents which are not matched to each other would prefer to be with each other rather than with their respectively 
assigned partners. 
  We investigate \SR with complete (i.e., every agent can be matched with any other agent) or incomplete preferences,
  with ties (i.e., two agents are considered of equal value to some agent) or without ties. 
  It is known that in general allowing ties makes the problem NP-complete.
  We provide algorithms for \SR that are, compared to those in the literature, more efficient when the input preferences are complete and have some structural property, such as being narcissistic, single-peaked, and single-crossing.
  However, when the preferences are incomplete and have ties,
  we show that being single-peaked and single-crossing does not reduce the computational complexity---\SR remains NP-complete.
\end{abstract}

\section{Introduction}\label{sec:intro}

Given $2n$ agents, each having preferences with regard to how suitable the other agents are as potential partners, 
the \SR problem is to decide whether there exists a matching, i.e., a set of disjoint pairs of the agents, 
without inducing a \emph{blocking pair}.
A blocking pair consists of two agents that are not matched to each other but prefer to be with each other rather than with their assigned partners. 
A matching without blocking pairs is called a \emph{stable matching}.

\SR was introduced by \citet{GaleShapley1962} in the 1960's and has been studied extensively since then~\cite{Knuth1997,Irving1985,Ronn1990,RothSotomayor1990,IrvingManlove2002,IwamaMiyazaki2008}.
While it is quite straightforward to see that stable matchings may not always exist, 
it is not trivial to see whether an existing stable matching can be found in polynomial time,
even when the input preference orders are \emph{complete} and \emph{do not contain ties} (i.e., each agent can be a potential partner to each other agent, and no two agents are considered to be equally suitable as a partner).
For the case without ties, 
\citet{Irving1985} and \citet{GusfieldIrving1989} provided $O(n^2)$-time algorithms 
to decide the existence of a stable matching and to find one if it exists---for complete preferences and for incomplete preferences, respectively. 
Deciding whether a given instance admits a stable matching is NP-complete~\cite{Ronn1990} when the given preferences are complete but may contain ties.

Solving \SR has many applications, such as matching students with each other to accomplish a homework project or users in a P2P file sharing network, 
assigning co-workers to two-person offices,
partitioning players in two-player games,
or finding receiver-donor pairs for organ transplants~\cite{KuLiMa1999,LeMaViGaReMo2006,RoSoUn2005,RoSoUn2007,MaOM2014}.
In such situations, the students, the people, or the players, who we jointly refer to as \emph{agents},
typically have certain \emph{structurally restricted} preferences concerning which other agents might be their best partners.
For instance,
when assigning roommates,
each agent may have an ideal room temperature and may prefer to be with another agent with the same penchant.
Such preferences are called \emph{narcissistic}. 
Moreover, if we order the agents according to their ideal room temperatures,
then it is natural to assume that each agent~$z$ prefers to be with an agent~$x$ rather than with another agent~$y$ if $z$'s ideal temperature is closer to~$x$'s than to~$y$'s.
These kind of preferences are called \emph{single-peaked}~\cite{Hotelling1929,Black1958,Coombs1964}.
Single-peakedness is used to model agents' preferences where 
there is a criterion, e.g., room temperature, 
that can be used to obtain a linear order of the agents
such that each agent's preferences over all agents along this order 
are strictly increasing until they reach the peak---their ideal partner---and then strictly decreasing.
Single-peakedness is a popular concept 
with prominent applications in voting contexts. 
If the input preferences are complete and contain no ties, then testing whether they are single-peaked can be done in linear time~\cite{BartholdiTrick1986,BaHa2011,DoiFal1994,EscLanOez2008}. 

Another possible restriction on the preferences is the \emph{single-crossing} property, 
which was originally proposed to model individuals' preferences on income taxation~\cite{Mirrlees1971,Roberts1977}.
To illustrate this, assume that the agents are increasingly ordered from 
left to right according to their income level.
Then, one may observe that the agents' roommates preferences towards two arbitrary agents, say $x$ and $y$, are such that the agents on the left prefer $x$ to $y$ while the agents on the right prefer $y$ to $x$.
This kind of preferences is not necessarily single-peaked as people do not always prefer to have a roommate of similar income level. 
If the input preferences are complete and contain no ties, then testing whether they are single-crossing can be done in polynomial time~\cite{DoiFal1994,ElkFalSli2012,BreCheWoe2013a}. 
We refer to \citet{BreCheWoe2016} and \citet{ElLaPe2017} for numerous references on single-peakedness and single-crossingness.

\paragraph{Related work.}

\citet{BartholdiTrick1986} studied \SR with narcissistic and single-peaked preferences.
They showed that for the case with linear orders (i.e., complete and without ties) a \SR instance with $2n$~agents \emph{always} admits a \emph{unique} stable matching, and they claimed an $O(n)$-time algorithm to find this matching~\cite[Section 3]{BartholdiTrick1986}.
However, we will show in \cref{ex:BT-n^2} that there are instances for which their algorithm runs in $O(n^2)$~time.
In terms of preference structures in the stable matching setting, using a connection to narcissistic
single-crossing preference profiles~(see \cref{sub:profile+properties} for the corresponding definition) 
and \emph{semi-standard Young tableaux}~\cite{Stanley1999},
\citet{CheFin2018} counted the number of narcissistic preference profiles that are also either
single-peaked or single-crossing.

\paragraph{Paper structure and our contributions.}
We study the computational complexity of \SR for structured preferences when incompleteness and ties are allowed.
In particular, we explore how the specific preference structures help in guaranteeing the existence of stable matchings and in designing 
efficient algorithms for finding a stable matching, even when the input preferences may be incomplete or contain ties.

\newcommand{\citeronn}{$^\triangle$\xspace}
\newcommand{\citeirv}{$^\diamondsuit$\xspace}
\newcommand{\citegi}{$^\spadesuit$\xspace}
\newcommand{\citebt}{$^\heartsuit$\xspace}
\newcommand{\citebcfn}{$^\star$\xspace}
\newcommand{\other}[1]{{\color{gray}#1}}
\newcommand{\mynew}[1]{{\color{winered}\boldmath{#1}}}
\begin{table}[t]
\centering
\caption{Complexity of \SR for restricted domains: narcissistic, single-peaked, and single-crossing preferences.
  ``nar'' means ``narcissistic'',
  ``sp'' means ``single-peaked'',
  ``sc'' means ``single-crossing'',
  ``always'' means that there is always a stable matching,
  and ``unique'' means that there there is a unique stable matching.
  Entries marked with \citeirv are from \citet{Irving1985}.
Entries marked with \citegi are from \citet{GusfieldIrving1989}.
Entries marked with \citeronn are from \citet{Ronn1990}.
Entries marked with \citebt are from \citet{BartholdiTrick1986}.
Entries marked boldfaced with a reference to the corresponding theorem~[T.~$x$] or proposition~[P.~$x$]
are new results shown in this paper.
Note that our hardness results for single-crossing preferences hold for the more restricted ``\tiesc'' variant.}
\label[table]{table:results}
\resizebox{\textwidth}{!}{
\begin{tabular}{lp{10ex}p{10ex}llll}
 \toprule
  & \multicolumn{2}{c}{without ties} && \multicolumn{2}{c}{with ties}\\
  & \multicolumn{1}{c}{complete}  & \multicolumn{1}{c}{incomplete} &&  \multicolumn{1}{c}{complete} &  \multicolumn{1}{c}{incomplete}\\
  \midrule
  no restriction & \multicolumn{1}{c}{$O(n^2)$\citeirv} & \multicolumn{1}{c}{$O(n^2)$\citegi} &&  \multicolumn{1}{c}{NP-c\citeronn} & \multicolumn{1}{c}{NP-c\citeronn}\\
  nar. \& sp & \multicolumn{1}{l}{$O(n^2)$, always, unique\citebt} & \multicolumn{1}{c}{$O(n^2)$\citegi} && \multicolumn{1}{c}{\mynew{$O(n^2)$, always\ [T.~\ref{thm:complete-ties-nsp-qudratic}]}} & \multicolumn{1}{c}{\mynew{\textbf{NP-c}\ [T.~\ref{incomplete-ties-sp-sc:np-c}]}}\\
  nar. \& sc & \multicolumn{1}{l}{\mynew{$O(n^2)$, always, unique\ [P.~\ref{complete-sc:p}]}} & \multicolumn{1}{c}{$O(n^2)$\citegi} && \multicolumn{1}{c}{\mynew{$O(n^2)$, always\ [P.~\ref{complete-sc:p}]}} & \multicolumn{1}{c}{\mynew{\textbf{NP-c}\ [T.~\ref{incomplete-ties-sp-sc:np-c}]}}\\
  nar. \& sp \& sc & \multicolumn{1}{l}{$O(n^2)$, always, unique\citebt\ \mynew{[P.~\ref{complete-sc:p}]}} & \multicolumn{1}{c}{$O(n^2)$\citegi} && \multicolumn{1}{c}{\mynew{$O(n^2)$, always\ [P.~\ref{complete-sc:p}]}} & \multicolumn{1}{c}{\mynew{\textbf{NP-c}\ [T.~\ref{incomplete-ties-sp-sc:np-c}]}}\\
\bottomrule
\end{tabular}
}
\end{table}

In \cref{sec:prelim},
we discuss natural generalizations of the well-known single-peaked and single-crossing preferences (that were originally introduced for linear orders) for incomplete preferences with ties.
In \cref{sec:complete}, we show that for complete preference orders,
structurally restricted preferences such as being narcissistic and single-crossing or
being narcissistic and single-peaked
guarantee the existence of stable matchings.  
Moreover, we demonstrate that the known algorithm of \citet{BartholdiTrick1986} can be extended to
always find a stable matching in two new cases:
The algorithm works
\begin{compactenum}[(1)]
  \item when the preferences are complete, narcissistic, single-crossing, and may contain ties
  as well as
  \item when the preferences are complete, narcissistic, single-peaked, and may contain ties.
\end{compactenum}
In \cref{sec:incomplete} we study the case where the preferences are incomplete
and may contain ties,
and prove that \SR becomes NP-complete, even when the preferences are narcissistic, single-peaked, and (tie-sensitive) single-crossing. 
Our results, together with those from related work, are surveyed in \cref{table:results}.
We conclude in \cref{sec:conclude} with some open questions.


\section{Fundamental concepts and basic observations}\label{sec:prelim}

In this section, we introduce fundamental concepts and notions, arising from stable matchings and structured preferences, and we make some crucial observations regarding relations between the different structured preferences.

\subsection{Preferences, acceptable sets, and acceptability graphs}
Let $V=\{1,2,\ldots, 2n\}$ be a set of $2n$~agents.
A \emph{(preference) profile}~$\Pot=(\succ_i)_{i\in V}$ is a collection of the preference orders of the agents from $V$.
Here, the \emph{preference order~$\succeq_i$} of each agent~$i\in V$ is a
weak order on a subset~$V(\succeq_i)\subseteq V$ of agents
that $i$ finds \emph{acceptable} as a partner.
The set~$V(\succeq_i)$ is called the \emph{acceptable set} of~$i$.
Recall that a \emph{weak order}~$\succeq$ on a set~$X$ is a transitive (i.e., $x\succeq y$
and $y\succeq z$ imply $x\succeq z$)
and complete~(i.e., $x\succeq y$ or $y\succeq x$ for all $x,y\in X$) binary relation on~$X$~(see \cite[Chapter 4]{BraFis2002} or \cite[Chapter 1.2.2]{BraConEndLanPro2016}).
For instance, the following binary relation~$\succeq$ with 
\begin{align}\label{eq:weak-order-ex}
  \succeq = \{(1,1),(2,2),(3,3),(1,2), (2,1), (1,3), (2,3)\}
\end{align}
is a weak order on~$\{1,2,3\}$. 
Note that the completeness of weak orders implies reflexivity.
For each agent~$i$ and each two acceptable agents~$x,y\in V(\succeq_i)$ of~$i$, the expression~``$x \succeq_i y$'' 
means that agent~$i$ weakly prefers~$x$ over~$y$
(i.e.,~$i$ finds that $x$ is at least as good as~$y$).
We use \emph{$\succ_i$} to denote the asymmetric part of~$\succeq_i$ 
(i.e., $x\succeq_i y$ and $\neg (y\succeq_i x)$, meaning that $i$ strictly prefers $x$ to $y$)
and \emph{$\indif_i$} to denote the symmetric part of $\succ_i$
(i.e., $x\succeq_i y$ and $y \succeq_i x$, meaning that $i$ values $x$ and $y$ equally).

Since a preference order~$\succeq$ can be decomposed into an asymmetric part~$\succ$ and a symmetric part~$\indif$, in the following when we illustrate a preference order, we only describe the~$\succ$ part and the ``relevant'' $\indif$~part.
For instance, the preference order~$\succeq$ as described in~\eqref{eq:weak-order-ex} will be depicted as follows:
\begin{align*}
  \succeq = 1 \indif 2 \succ 3.
\end{align*}
Observe that we omitted the relations~$1\sim 1$, $2\sim 2$, and $3\sim 3$ for the sake of readability.

We assume that the acceptability relation between each two agents is symmetric, i.e., for each two distinct agents~$i$ and~$j$ it holds that ``$i$ finds $j$ acceptable if and only if $j$ finds $i$ acceptable'',
as otherwise $i$ and $j$ will never be partners of each other.
Formally, this means that $i\in V(\succeq_j)$ if and only if $j \in V(\succeq_i)$.
Moreover, since an agent that is not acceptable to any other agent will never obtain a partner,
we also assume that for each agent~$i$ there is at least another agent~$j\neq i$ with $i\in V(\succeq_j)$.

We note that although in \textsc{Stable Roommates}
an agent cannot be matched to itself, 
it may still make sense to include an agent~$x$ in its own acceptable set, i.e., $x \in V(\succeq_x)$,
for instance when the preferences of~$x$ are based on how close or similar
agents are to the ideal partner of~$x$ and a partner which is ``identical'' to~$x$ itself is an ideal partner of~$x$. 
This has no implication at this point, because no agent can be matched to itself and ``identical clones'' do not exist.
Allowing this, however, drastically simplifies definitions of restricted preferences in \cref{sub:profile+properties}, in particular when narcissistic preferences are involved.
We call an agent~$x$ a \emph{most acceptable} agent of another agent~$y$ if for all $z \in V(\succeq_y)\setminus \{x,y\}$ it holds that $x \succeq_y z$,
where $\succeq_y$ denotes the preference order of~$y$.
Note that an agent can have more than one most acceptable agent.

Let $X \subseteq V$ and $Y \subseteq V$ be two disjoint sets of agents and $\succeq$ be a binary relation over~$V\times V$.
To simplify notation, 
by $X\succeq Y$, we mean that for each two agents~$x$ and $y$ with $x\in X$ and $y \in Y$ 
it holds that $x\succeq y$.
Analogously, by $X \succ Y$ and $X \indif Y$ we mean that
for each two agents~$x$ and $y$ with $x\in X$ and $y \in Y$ 
it holds that $x\succ y$ and $x \indif y$, respectively.
For each binary relation symbol~$\star \in \{\succeq, \succ, \indif\}$,
we use $X \star y$ and $y\star X$ as shortcut for $X \star \{y\}$ and $\{y\} \star X$,
respectively.

To visualize which agent is considered as acceptable by an agent we 
introduce the notion of \emph{acceptability graph}s.
An \emph{acceptability graph}~$G$ for a set~$V$ of agents is
an undirected graph without loops, where an edge signifies that two distinct agents find each other acceptable.
We use~$V$ to also denote the vertex set of~$G$. 
Formally, for each agent~$i\in V$, there is a vertex~$i$ corresponding to agent~$i$.
There is an edge~$\{i,j\}$ in $G$ if $i \in V(\succeq_j) \setminus \{j\}$ and $j \in V(\succeq_i)\setminus \{i\}$.
As already discussed,  we assume without loss of generality that $G$ does not contain isolated vertices as otherwise the corresponding agents will never be able to obtain a partner.
We illustrate two prominent special cases of acceptability graphs in \cref{fig:acceptability-graph}.

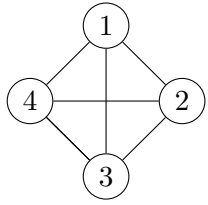
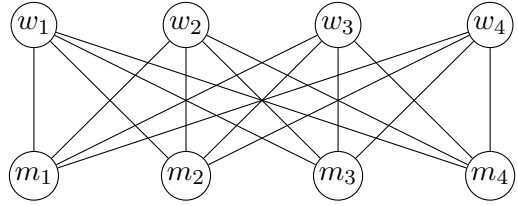
\begin{figure}[t]
  \begin{subfigure}[t]{0.39\textwidth}
    \centering
    \begin{tikzpicture}
    \foreach \i in {1,2,3,4} {
      \node[nodest] (n\i) at (-\i*90+180:1) {$\i$};
    }

      \foreach \from/\to in {n1/n2,n1/n3,n1/n4,n2/n3,n2/n4,n3/n4,n3/n4}
        \draw (\from) -- (\to);
    \end{tikzpicture}
    \caption{The underlying acceptability graph of a \SR instance with complete preferences,
                    where \emph{any} two distinct agents may find each other acceptable.}
    \label{fig:acceptability-graph:unrestricted}
  \end{subfigure}
  \hspace{0.05\textwidth}
  \begin{subfigure}[t]{0.53\textwidth}
    \centering
    \begin{tikzpicture}
      \node[nodest] (w1) at (1,2) {$w_1$};
      \node[nodest] (w2) at (3,2) {$w_2$};
      \node[nodest] (w3) at (5,2) {$w_3$};
      \node[nodest] (w4) at (7,2) {$w_4$};
      \node[nodest] (m1) at (1,0) {$m_1$};
      \node[nodest] (m2) at (3,0) {$m_2$};
      \node[nodest] (m3) at (5,0) {$m_3$};
      \node[nodest] (m4) at (7,0) {$m_4$};

      \foreach \from in {w1,w2,w3,w4}
        \foreach \to in {m1,m2,m3,m4}
          \draw (\from) -- (\to);
    \end{tikzpicture}
  \caption{The underlying acceptability graph of a classic \textsc{Stable Marriage} instance,
    which is always bipartite.
    In such an instance, each woman from the top row can only be
           matched with a man from the bottom row,
and the converse.}\label{fig:acceptability-graph:stable-marriage}
  \end{subfigure}
  \caption{Acceptability graphs of two special cases of \SR.}\label{fig:acceptability-graph}
\end{figure}

\subsection{Blocking pairs and stable matchings}
Given a preference profile~$\Pot$ for a set~$V$ of agents,
a \emph{matching}~$M\subseteq E(G)$ 
is a subset of disjoint pairs~$\{x,y\}$ of agents with $x\neq y$ (or edges in $E(G)$), 
where $E(G)$ is the set of edges in the corresponding acceptability graph~$G$.
Slightly abusing notation, given an agent~$x\in V$, if there exists an agent~$y\in V$
with $\{x,y\}\in M$, then we let \emph{$M(x)$} $\coloneqq y$ and say that $x$ and $y$ are partners of each other (under $M$); otherwise we let $M(x)\coloneqq \bot$. 
We say that a pair $\{x,y\}$ is \emph{unmatched} (under $M$) if $\{x,y\}\notin M$.
A matching~$M$ is \emph{perfect} if every agent is assigned a partner by~$M$.
An unmatched pair~$\{x,y\}\in E(G)\setminus M$ \emph{is blocking}~$M$
if the pair ``prefers'' to be matched with each other rather than staying in their current state, i.e.,
it holds that 
\begin{align*}
( M(x)=\bot \vee y\succ_x M(x) ) \wedge ( M(y)=\bot \vee x \succ_y M(y))\text{.}
\end{align*}

A matching~$M$ is \emph{stable} if no unmatched pair is blocking $M$.
When the preferences may contain ties, our stability concept is sometimes referred to as \emph{weak stability} in the literature to distinguish from two other stability concepts,
called \emph{strong stability} and \emph{super stability}~\cite{GusfieldIrving1989}.
When the preferences do not contain ties, all these three stability concepts are equivalent.
In this work, we only focus on \emph{weak stability}.
For the sake of brevity, we thus simply use \emph{stability} to refer to \emph{weak stability}. 

\begin{example}
\label{ex:running}
  Consider the following profile: 
  \begin{profile}{rccccccc}
   \text{agent } 1 \colon & 1 & \succ & 2  & \succ & 3 & \succ & 4,\\
   \text{agent } 2 \colon & 2 & \succ & 3 & \succ & 1 & \succ & 4,\\
   \text{agent } 3 \colon & 3 & \succ & 2 & \indif & 4 & \succ & 1,\\
   \text{agent } 4 \colon & 4 & \succ & 3 & \succ & 2 & \succ & 1.\\
 \end{profile}
  It admits exactly two stable matchings:~$M_1=\{\{1,2\}, \{3,4\}\}$,
  and $M_2=\{\{1,4\}$, $\{2,3\}\}$; both are perfect. 
  However, if agent~$3$ changes its preference order to $3\succ 1 \succ 2 \succ 4$,
  then the resulting profile does not admit any stable matching:
  One can check that for each matching, any agent~$i$, $1\le i \le 3$, that is matched to agent~$4$ will form a blocking pair
  together with the agent that is at the third position of the preference order of~$i$.
\end{example}

\noindent We investigate the computational complexity of the following stable matching problem.
\decprob{Stable Roommates}
{A preference profile~$\Pot$ for a set~$V=\{1,2,\ldots, 2n\}$ of $2n$ agents.}
{Does $\Pot$ admit a stable matching?}

\subsection{Properties of a preference profiles}\label{sub:profile+properties}
A preference profile~$\Pot$ may have one ore more of the following three simple properties:

\begin{definition}[Completeness preferences with ties]
A profile~$\Pot$ is \emph{complete} if for each agent~$i\in V$ it holds that $V(\succeq_i) \cup \{i\} = V$; otherwise it is \emph{incomplete}.
Profile~$\Pot$ is said to have a \emph{tie} if there is an agent~$i \in V$ 
and there are two distinct agents~$x,y \in V(\succeq_i)$ with $x \indif_i y$.
\end{definition}

Note that complete preferences without ties are exactly linear orders on~$V$. 

\begin{definition}[Narcissism]
A profile~$\Pot$ is \emph{narcissistic} if each agent~$i$ \emph{strictly} prefers itself to every other acceptable agent, i.e., for each $j\in V(\succeq_i)\setminus \{i\}$ it holds that $i \succ_i j$. 
\end{definition}

\noindent The profile given in \cref{ex:running} is complete and narcissistic, and contains one tie.

We note that having complete preferences means that any two distinct agents can be matched with each other.
Thus, a stable matching must be perfect. 
As for the narcissistic property alone, 
there is no restriction on or guarantee for the existence of a stable matching. 
We will, however, see that 
requiring some restricted preferences, such as single-peaked or single-crossing preferences as formally defined below, to be \emph{also} narcissistic makes a difference (see for instance \cref{ex:sp+nar->stablematching}).

As already discussed in \cref{sec:intro}, the single-peaked and the single-crossing properties were originally introduced and studied mainly for linear preference orders (i.e., preferences without ties).
For preferences with ties, 
a natural generalization is to think of a possible linear extension of the preferences for which the single-peaked or single-crossing property holds.
We consider this variant in our paper. 

\begin{definition}[Single-peakedness]\label{def:sp}
Let $\rhd$ be a linear order on the agent set~$V$.
An agent~$i$ with preference order~$\succeq_i$ is \emph{single-peaked with respect to~$\rhd$} if
for each three distinct agents~$x, y, z$  it holds that 
\begin{align*}
 x\rhd y \rhd z \text{ and } x\succ_i y \text{ implies } y \succeq_i z\text{.} 
\end{align*}
Accordingly, profile~$\Pot$ is \emph{single-peaked} if there exists a linear order~$\rhd$ on~$V$ such that each agent from $\Pot$ is single-peaked with respect to~$\rhd$.
We refer to~$\rhd$ as a \emph{single-peaked order} of the profile.
\end{definition}

\begin{example}\label{ex:sp+nar->stablematching}
  The profile given in \cref{ex:running} is narcissistic and single-peaked with respect to
  the linear order~$1\rhd 2 \rhd 3 \rhd 4$.
  See \cref{fig:N-SC-SP_SP} for an illustration.
  In fact, as we will see in \cref{sec:complete}, a narcissistic
  and single-peaked preference profile always admits a stable matching. 
  Recall that if agent~$3$ changes its preference order to $3\succ 1 \succ 2 \succ 4$, then the
  resulting profile does not admit any stable matching and, indeed, it is also not single-peaked anymore.
\end{example}

Just as for the single-peakedness property, the single-crossingness property also requires a natural linear order of the agents, the so-called single-crossing order.
However, unlike the single-peakedness property which assumes that the preferences of an agent~$i$ over two agents are compared by their ``distance'' to the peak along the single-peaked order,
the single-crossingness property assumes that 
the agents' preferences over each two distinct agents 
change (cross) at most once.

In fact, for preferences with ties, two natural single-crossing notions are of interest.
To describe them, we introduce two notions that can be used to partition a subset of agents
according to their preferences over two distinct agents.
Given a preference profile~$\Pot=(\succ_i)_{i \in V}$, for each two distinct agents~$x$ and $y$,
let
\begin{align*}
  B_{\Pot}(x,y)\coloneqq \{i \in V \mid \succeq_i \in \Pot \wedge x\succ_i y\}
\end{align*}
denote the subset of agents~$i$ that strictly prefer $x$ to~$y$~(i.e., $x$ is better than~$y$),
and let
\begin{align*}
  T_{\Pot}(x,y) \coloneqq \{i \in V \mid \succeq_i \in \Pot \wedge x \indif_i y\}\end{align*}
denote the subset of agents~$i$ that find $x$ and $y$ to be of equal value~(i.e., $x$ and $y$ are tied with each).
Here, symbol~$B$ means ``better than'' while symbol~$T$ means ``tied with''.

\begin{definition}[Single-crossingness]\label{def:sc}
  Let $\rhd$ be a linear order on the agent set~$V$.
A preference profile~$\Pot=(\succeq_i)_{i\in V}$ is \emph{single-crossing with respect to the linear order~$\rhd$} if 
there exists a preference profile~$\Pot'=(\succeq'_i)_{i\in V}$ without ties
such that the following two conditions hold.
\begin{enumerate}[(i)]
  \item For each agent~$i\in V$, the order~$\succeq'_i$ is a \emph{linear extension} of $\succeq_i$, i.e., the order~$\succeq'_i$ is a linear order on the acceptable set~$V(\succeq_i)$ with $\succ_i \subseteq \succ'_i$.
  \item For each two distinct agents~$x,y\in V$ it holds that $T_{\Pot'}(x,y)=\emptyset$ such that either~$B_{\Pot'}(x,y)\rhd B_{\Pot'}(y,x)$ or $B_{\Pot'}(y,x)\rhd B_{\Pot'}(x,y)$.
\end{enumerate}
Accordingly, profile~$\Pot$ is \emph{single-crossing} if there exists a linear order~$\rhd$ on the agent set~$V$ such that $\Pot$ is single-crossing with respect to~$\rhd$.
\end{definition}

\begin{example}
  The profile from \cref{ex:running} is single-crossing with respect to the linear order~$1 \rhd 2 \rhd 3 \rhd 4$.
  See \cref{fig:restricted-profiles:NSC} for an illustration.
\end{example}

\begin{figure}[t]\captionsetup[subfigure]{position=t}
   \begin{subfigure}[b]{0.6\textwidth}
   \centering 
    \begin{tikzpicture}
      \begin{axis}[
          symbolic x coords={1,2,3,4},
          xlabel shift=-1ex,
          xlabel=agents (alternatives),
          y dir=reverse,
          ylabel shift=-.5ex,
          ylabel=positions in preferences,
          xtick=data,
          ytick=data,
          x = .9cm,
          y = .7cm,
          legend style={draw=none,nodes={scale=1, transform shape},at={(1.05,0.9)},anchor=north west}
          ]
        \addplot[color=darkyellow,mark=*,line width=1.2pt] coordinates {
          (1,  1)
          (2,  2)
          (3,  3)
          (4,  4)
        };

        \addplot[color=red,mark=triangle*,line width=1.2pt] coordinates {
          (1,  3)
          (2,  1)
          (3,  2)
          (4,  4)
        };

        \addplot[color=blue,mark=square*,line width=1.2pt] coordinates {
          (1,  4)
          (2,  2)
          (3,  1)
          (4,  2)
        };

        \addplot[color=green,mark=diamond*,line width=1.2pt] coordinates {
          (1,  4)
          (2,  3)
          (3,  2)
          (4,  1)
        };
        \legend{\clR{1: $1 \succ 2 \succ 3 \succ 4$},\clB{2: $2 \succ 3 \succ 1 \succ 4$},\clG{3: $3 \succ 2 \indif 4 \succ 1$},\clO{4: $4 \succ 3 \succ 2 \succ 1$}}
      \end{axis}
    \end{tikzpicture}
    \caption{Visualization of single-peaked preferences.}\label{fig:N-SC-SP_SP}
  \end{subfigure}
~~ 
  \begin{subfigure}[b]{0.4\textwidth}
   \centering
\begin{tikzpicture}

   \matrix (scprofilet) [matrix of math nodes, row sep=10pt, column sep=2.6pt] {
    1 \colon & 1 & \succ & 2  & \succ & 3 & \succ & 4\\
    2 \colon & 2 & \succ & 3 & \succ & 1 & \succ & 4\\
    3 \colon & 3 & \succ & 2 & \indif & 4 & \succ & 1\\
    4 \colon & 4 & \succ & 3 & \succ & 2 & \succ & 1\\
    \\
  };

  \begin{scope}[on background layer]
  \draw[linemark=darkyellow] 
  (scprofilet-1-2.north) -- (scprofilet-1-2.south) -- 
  (scprofilet-2-6.north) -- (scprofilet-2-6.south) -- 
  (scprofilet-3-8.north) -- (scprofilet-3-8.south) --
  (scprofilet-4-8.north) -- (scprofilet-4-8.south);

  \draw[linemark=lightrose] 
  (scprofilet-1-4.north) -- (scprofilet-1-4.south) -- 
  (scprofilet-2-2.north) -- (scprofilet-2-2.south) -- 
  (scprofilet-3-5.north) -- (scprofilet-3-5.south) --
  (scprofilet-4-6.north) -- (scprofilet-4-6.south);

  \draw[linemark=lightblue] 
  (scprofilet-1-6.north) -- (scprofilet-1-6.south) -- 
  (scprofilet-2-4.north) -- (scprofilet-2-4.south) -- 
  (scprofilet-3-2.north) -- (scprofilet-3-2.south) --
  (scprofilet-4-4.north) -- (scprofilet-4-4.south);

  \draw[linemark=lightgreen]
  (scprofilet-1-8.north) -- (scprofilet-1-8.south) -- 
  (scprofilet-2-8.north) -- (scprofilet-2-8.south) -- 
  (scprofilet-3-5.north) -- (scprofilet-3-5.south) --
  (scprofilet-4-2.north) -- (scprofilet-4-2.south);
  \end{scope} 
\end{tikzpicture}
 \caption{Visualization of (tie-sensitive) single-crossing preferences.} \label{fig:restricted-profiles:NSC}
 \end{subfigure}
  \caption{Illustration of the profile from \cref{ex:running} which has complete preferences, contain ties, and are narcissistic, single-peaked, and (tie-sensitive) single-crossing.}\label{fig:N-SC-SP}
\end{figure}
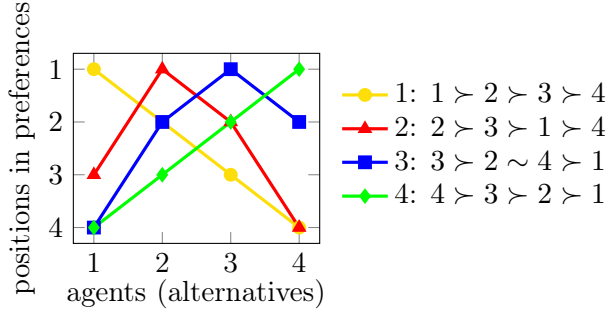
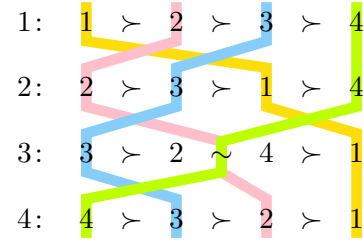

We also consider a more restricted concept of single-crossingness which requires that the agents that have ties towards a pair are ``ordered in the middle''.
\begin{definition}[\Tiescness]\label{def:tiesc}
Again, let $\rhd$ be a linear order on the agent set~$V$.
A profile~$\Pot$ on~$V$ is \emph{\tiesc{} with respect to the order~$\rhd$} if 
for each two distinct agents~$x$ and $y$ it holds that 
\begin{align*}
  \text{either } B_{\Pot}(x,y) \rhd T_{\Pot}(x,y) \rhd B_{\Pot}(y,x) 
  \text{ or } B_{\Pot}(y,x) \rhd T_{\Pot}(x,y) \rhd B_{\Pot}(x,y). 
\end{align*}
\end{definition}
\noindent See \cref{fig:restricted-profiles} for an illustration of the different types of restricted preferences for the case where the input preference orders are complete and contain no~ties.

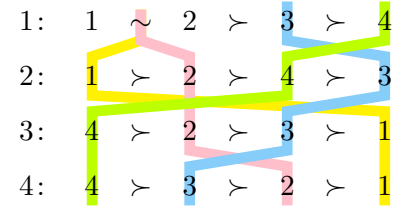
\begin{figure}[t]\captionsetup[subfigure]{position=t}
  \begin{subfigure}[b]{0.49\textwidth}
   \centering
    \begin{tikzpicture}[scale=1]
      \begin{axis}[
          symbolic x coords={1,2,3,4},
          xlabel shift=-1ex,
          xlabel=agents (alternatives),
          y dir=reverse,
          ylabel shift=-.5ex,
          ylabel=positions in preferences,
          xtick=data,
          ytick=data,
          x = 1cm,
          y = .68cm,         
          legend style={draw=none,nodes={scale=1.1, transform shape},at={(0,1.9)},anchor=north west}
          ]

        \addplot[color=darkyellow,mark=*,every mark/.append style={solid, fill=darkyellow},line width=1.2pt] coordinates {
          (1,  1)
          (2,  2)
          (3,  3)
          (4,  4)
        };

        \addplot[color=red,mark=triangle*,every mark/.append style={solid, fill=red},line width=1.2pt] coordinates {
          (1,  4)
          (2,  1)
          (3,  2)
          (4,  3)
        };

        \addplot[color=blue,mark=square*,every mark/.append style={solid, fill=blue},line width=1.2pt] coordinates {
          (1,  3)
          (2,  2)
          (3,  1)
          (4,  4)
        };
        \addplot[color=green,mark=star,every mark/.append style={solid, fill=green},line width=1.2pt] coordinates {
          (1,  4)
          (2,  3)
          (3,  2)
          (4,  1)
        };
        \legend{\clO{1: $1 \succ 2 \succ 3 \succ 4$},\clB{2: $2 \succ 3 \succ 4 \succ 1$},\clG{3: $3 \succ 2 \succ 1 \succ 4$},\clR{4: $4 \succ 3 \succ 2 \succ 1$}}
      \end{axis}
    \end{tikzpicture}
    \caption{A \SR instance with narcissistic and single-peaked preferences.
      They are \emph{not} single-crossing because of the following.
      To form a single-crossing order,
      due to pair~$\{1,4\}$, agent~$1$ must be next to agent~$3$,
      and agent~$2$ must be next to agent~$4$.
      Moreover, due to pair~$\{2,3\}$,
      agent~$1$ must be next to agent~$2$,
      and agent~$3$ must be next to agent~$4$.
All these four conditions, however, cannot be satisfied by a linear order.} \label{fig:restricted-profiles:NSP-notTSC}
  \end{subfigure}
~~  
  \begin{subfigure}[b]{0.47\textwidth}
   \centering
\begin{tikzpicture}
  \matrix (scprofile) [matrix of math nodes, row sep=0pt, column sep=0pt] {
   \Agent~1 \colon & 1 & \indif & 2  & \succ & 3 & \succ & 4\\
   \Agent~2 \colon & 1 & \succ & 2 & \succ & 4 & \succ & 3\\
    \Agent~3 \colon & 4 & \succ & 2 & \succ & 3 & \succ & 1\\
    \Agent~4 \colon & 4 & \succ & 3 & \succ & 2 & \succ & 1\\
  };
\end{tikzpicture}

\xhrulefill{gray}{1pt}\par
~\\
\begin{tikzpicture}  
  \begin{scope}[yshift=-3cm]
  \matrix (scprofilet) [matrix of math nodes, row sep=6pt, column sep=4pt] {
    1 \colon & 1 & \sim & 2  & \succ & 3 & \succ & 4\\
    2 \colon & 1 & \succ & 2 & \succ & 4 & \succ & 3\\
    3 \colon & 4 & \succ & 2 & \succ & 3 & \succ & 1\\
    4 \colon & 4 & \succ & 3 & \succ & 2 & \succ & 1\\
  };
  \end{scope}
  \begin{scope}[on background layer]
  \draw[linemark=yellow] 
  (scprofilet-1-3.north) -- (scprofilet-1-3.south) -- 
  (scprofilet-2-2.north) -- (scprofilet-2-2.south) -- 
  (scprofilet-3-8.north) -- (scprofilet-4-8.north) --
  (scprofilet-4-8.south); 

  \draw[linemark=lightrose] 
  (scprofilet-1-3.north) -- (scprofilet-1-3.south) -- 
  (scprofilet-2-4.north) -- (scprofilet-2-4.south) -- 
  (scprofilet-3-4.north) -- (scprofilet-3-4.south) --
  (scprofilet-4-6.north) -- (scprofilet-4-6.south);

  \draw[linemark=lightblue] 
  (scprofilet-1-6.north) -- (scprofilet-1-6.south) -- 
  (scprofilet-2-8.north) -- (scprofilet-2-8.south) -- 
  (scprofilet-3-6.north) -- (scprofilet-3-6.south) --
  (scprofilet-4-4.north) -- (scprofilet-4-4.south);

  \draw[linemark=lightgreen]
  (scprofilet-1-8.north) -- (scprofilet-1-8.south) -- 
  (scprofilet-2-6.north) -- (scprofilet-2-6.south) -- 
  (scprofilet-3-2.north) -- (scprofilet-3-2.south) --
  (scprofilet-4-2.north) -- (scprofilet-4-2.south);
\end{scope}
\end{tikzpicture}
    \caption{Top: A \SR instance with single-crossing preferences, with a single tie.
They are \emph{not} \tiesc since~$\{2,3,4\}$ implies that 
$1\rhd 2\rhd 3\rhd 4$ and its reverse are the only possible single-crossing orders.
However, regarding~$\{1,2\}$ the preferences are \tiesc \emph{neither} with respect to~$\rhd$ \emph{nor} with respect to its reverse.
             Bottom: A possible linear extension, showing single-crossingness.  
           } \label{fig:restricted-profiles:SC-notST}
  \end{subfigure}
  \caption{Visualization of preference profiles with different structural properties.}\label{fig:restricted-profiles}
\end{figure}

\subsection{Basic observations for the structural properties of preferences}\label{sub:basic-obser}
Incomplete preferences with ties, i.e., weak orders, as used in this work are incomparable to partial orders because partial orders are antisymmetric.
However, the asymmetric part of a weak order is indeed a partial order.
There are many slightly different concepts of single-peakedness and single-cross\-ing\-ness for partial orders~\cite{Lack2014,ElkFalLacObr2015,FiHe2016}.
It is known that detecting single-peakedness or single-crossing\-ness is NP-hard for partial orders under most of the concepts studied in the literature~\cite{BartholdiTrick1986,BaHa2011,DoiFal1994,EscLanOez2008,ElkFalSli2012,BreCheWoe2013a}.
For partial orders, our two single-crossing concepts are \emph{incomparable}.
In particular, there are incomplete preferences with ties which are single-crossing but \emph{not} \tiesc, and the converse also holds.
For complete preferences with ties which is a restricted case of partial orders, \citet{ElkFalLacObr2015} showed that \tiesc{} preference profiles are a strict subset of single-crossing preference profiles.
In the following, we extend these results by considering the case when the preferences can be incomplete.

\begin{proposition}\label[proposition]{prop:relation-sc-tsc}
  \begin{enumerate}[(1)]
    \item\label{statement:noties:tiesc=sc} For preferences without ties, \tiesc{ness} is equivalent to single-crossingness. 
    \item\label{statement:ties:sc-not=>tiesc} For preferences with ties, single-crossingness does not always imply \tiesc{ness}.
    \item\label{statement:ties:tiesc=>sc} For preferences with ties, \tiesc{ness} implies single-crossingness.
  \end{enumerate}
\end{proposition}

\begin{proof}
  Statement~\eqref{statement:noties:tiesc=sc} follows from the observation that when a profile does not have any ties, the definitions of single-crossingness (\cref{def:sc}) and \tiescness \cref{def:tiesc} coincide.
  The preference profile given in \cref{fig:restricted-profiles:SC-notST} is single-crossing but \emph{not} \tiesc{}, which shows Statement~\eqref{statement:ties:sc-not=>tiesc}.

  It remains to show Statement~\eqref{statement:ties:tiesc=>sc}. Let~$\Pot=(\succeq_i)_{i\in V}$ be a profile which is \tiesc{} with respect to some linear order~$\rhd$ on~$V$.
  We resolve the ties in the preferences given in $\Pot$ according to an arbitrary but fixed order on~$V$ and show that this extended profile fulfills the conditions given in the definition for \tiescness{} (see \cref{def:tiesc}). 
  To this end, let $\border$ be an arbitrary but fixed linear order on~$V$ and let $\Pot'=(\succeq'_i)_{i\in V}$ be a copy of the profile~$\Pot$ with $\succeq'_i=\succeq_i$, $i\in V$.
  For each agent~$i\in V$ and for each pair~$\{x,y\}\subseteq V(\succeq_i)$ of agents acceptable to $i$ such that $x\indif_i y$,
  let $x \succ'_i y$ (i.e., delete $(x,y)$ from $\succeq'_i$) if and only if $x \border y$.
  In this way, for each agent~$i$ and for each two distinct agents~$x,y \in V(\succeq_i)$ acceptable to~$i$, we have that 
  \begin{align}\label{eq:linear-extension}
    x \succ'_i y \text{ implies } x \succeq_i y,
  \end{align}
  since the preference order~$\succeq_i$ is a weak order on~$V(\succeq_i)$.
  
  First of all, we claim that $\Pot'$ fulfills the first condition given in \cref{def:sc}.
  For the sake of contradiction, suppose that there is an agent~$i$ such that $\succeq'_i$ is not a linear extension of~$\succeq_i$.
  Observe that by our construction,~$\succeq'_i$ is a complete binary relation on~$V(\succeq_i)$.
  Moreover, for each two acceptable agents~$x, y \in V(\succeq_i)$ it holds that $x\succ_i y$ implies that $x \succ'_i y$. 
  Hence, 
  the assumption that $\succeq'_i$ is not a linear extension of $\succeq_i$ means that $\succ'_i$ is not transitive, i.e.,
  there are three acceptable agents~$x,y,z\in V(\succeq_i)$ with
  \begin{align}\label{eq:assumption}
    x\succ'_i y\text{, } y\succ'_i z\text{, and }z\succ'_i x.
  \end{align}
  By Property~\eqref{eq:linear-extension}, we infer that
  $x\succeq_i y$, $y \succeq_i z$, and $z \succeq_i x$.
  By the transitivity of $\succeq_i$, these three relations imply that
  $x\sim_i y$, $y \sim_i z$, and $z \sim_i x$.
  This means that the pairwise relative orders of $x,y,z$ in $\succeq'_i$ are resolved according to the order~$\border$, i.e., $x\border y$, $y\border z$,
  and $z\border x$---a contradiction to $\border$ being transitive.

  It remains to show that $\Pot'=(\succeq'_i)_{i\in V}$ is single-crossing with respect to $\rhd$.
  Suppose towards a contradiction that $\Pot'$ is not single-crossing with respect to~$\rhd$,
  and suppose that there are three agents~$i,j,k$ in the order~$i\rhd j \rhd k$ and there are two acceptable agents~$x,y\in V(\succeq'_i)\cap V(\succeq'_j) \cap V(\succeq'_k)$ such that
  \begin{align}\label{eq:sc-assumption}
    x\succ'_i y, y\succ'_j x\text{, and }x \succ'_k y. 
  \end{align}
  By Property~\eqref{eq:linear-extension}, we infer that  
  \begin{align}\label{eq:three-agents}
    x\succeq_i y\text{, } y\succeq_j x, \text{ and } x \succeq_k y.
  \end{align}
  We distinguish between two cases for $x \succeq_j y$, in each case aiming at reaching a contradiction.

  If $x\indif_j y$, then by \eqref{eq:sc-assumption} 
  it follows that $y \border x$ and that $\neg(y\sim_i x)$ and $\neg(y \sim_k x)$.
  Thus, by \eqref{eq:three-agents}, it follows that $x \succ_i y$ and $x\succ_k y$%
  ---a contradiction to $\rhd$ being a \tiesc{} order for $\Pot$.

  If $y \succ_j x$, then since $\rhd$ is a \tiesc{} order for $\Pot$,
  by \eqref{eq:three-agents} it follows that
  $\neg(x\succ_i y)$ or $\neg(x \succ_k y)$.
  This implies that $x\indif_i y$ or $x \indif_k y$.
  If $x \indif_i y$, then $y \succ_k x$---a contradiction to \eqref{eq:three-agents}.
  If $x \indif_k y$, then $y \succ_i x$---again a contradiction to \eqref{eq:three-agents}.
%
%
\end{proof}

For incomplete preferences with ties, 
\citet{Lack2014} showed that detecting single-peakedness is NP-complete.
For complete preferences with ties, 
while \citet{ElkFalLacObr2015} showed that detecting single-crossingness is NP-complete,
\citet{FitzSPW15} and \citet{ElkFalLacObr2015} provided polynomial-time algorithms for detecting single-peakedness and ties-sensitive single-crossingness, respectively. 
All these known hardness results seem to hold only when the preferences have ties.
However, we observe that the hardness reduction for Corollary~6 of \citet{ElkFalLacObr2015} indeed can be adapted to show NP-hardness for deciding whether an incomplete preference profile without ties is single-peaked (resp.\ single-crossing).
The crucial differences are that they allow ties and that the agents and the alternatives are different while we do not allow ties and our agent set is the same as the set of alternatives. 
For the sake of completeness, we show this adapted reduction. 

\begin{proposition}\label{prop:tiesless-incomp-sp-sc-np-c}
  Deciding whether a preference profile with incomplete preferences and without ties is single-crossing (or equivalently \tiesc) or single-peaked is NP-complete. 
\end{proposition}

\begin{proof}
  The three decision problems are clearly contained in NP as given a linear order~$\rhd$ on the agent set one can verify in polynomial-time whether the given preference profile is single-crossing, \tiesc, or single-peaked with respect to the order~$\rhd$; note that the preferences do not contain ties.

  To show NP-hardness, as \citet{ElkFalLacObr2015}, we reduce from the NP-hard \textsc{Betweenness} problem~\cite{Opa79}:
  \decprob{Betweenness}
  {Given a universe $U= \{u_1,\ldots,u_n\}$ and a set~$T=\{t_1,\ldots,t_m\}$ of ordered triples over~$U$.}
  {Is there a \emph{betweenness order~$\border$}, that is, a total linear order over~$U$ such that
   for each triple~$(x,y,z)$ from~$T$ it holds that
   either $x \border y \border z$ or $z \border y \border x$?}

 \noindent
 The reduction for the single-peaked case is quite simple while
 the reduction for the single-crossing case is similar to the one used by \citet[Corollary 6]{ElkFalLacObr2015}.

 \smallskip
 
 \noindent \textbf{The single-peakedness case.} Given an instance~$I=(U,T)$ of \textsc{Betweenness},
 where $U=\{u_1,\ldots,u_n\}$ and $T=\{t_1,\ldots,t_m\}$, 
 we construct a preference profile~$\Pot=(\succ_x)_{x\in V}$ for an agent set~$V$, whose preferences contain no ties but may be incomplete; without loss of generality we assume that the elements in each ordered triple~$(x,y,z)\in T$ are pairwise distinct.
 The agent set~$V$ has two types of agents, which sum up to $n+2m$~agents.
 First, for each element~$u_i$ add to $V$ an \emph{element} agent with the same name.
 Second,  for each triple~$t_j=(x,y,z)$ from~$T$ (with~$x,y,z \in U$) add to $V$ two agents~$a_j$ and~$a'_j$.
 We will construct the preference orders of~$a_j$ and~$a'_j$ to ensure that the second element~$y$ in triple~$t_j$ is ordered between the other two elements~$x$ and $y$ in a betweeness order.

 The preference orders of the agents are constructed as follows:
 \begin{itemize}
   \item Each element agent~$u_i \in U$ only finds those agents acceptable that ``contain''~$u_i$, and has a linear order on these acceptable agents that are consistent with the following order 
 $$L\coloneqq a_1 \succ \cdots \succ a_m \succ a'_1 \succ \cdots \succ a'_m.$$
 Formally, the preference order of agent~$u_i$ is
 $$\succ_{u_i}\colon [A(u_i)] \succ [A'(u_i)],$$
 where $A(u_i)=\{a_j\mid u_i \in t_j\}$ (resp.\ $A'(u_i)=\{a'_j\mid u_i \in t_j\}$)
 and $[A(u_i)] \succ [A'(u_i)]$ is a linear order on $A(u_i) \cup A'(u_i)$ that respects~$L$.
 
 \item  For each triple~$t_j=(x,y,z) \in T$, the preference orders of $a_j$ and $a'_j$ are
 \begin{align*}
  \succ_{a_j}\colon y \succ x \succ z \text{ and } \succ_{a'_j}\colon y \succ z \succ x.  
 \end{align*}
\end{itemize}

 We show that $\Pot$ is single-peaked if and only if the given instance is a yes-instance.
 On the one hand, every betweenness order~$\border$ for~$I=(U,T)$ can be extended to a single-peaked order for~$\Pot$ by appending to~$\border$ the order~$L$. 
 On the other hand, note that the preference orders of the agents of the second type require a single-peaked order to place element~$y$ in between $x$ and $z$.
 Thus, a single-peaked order~$\rhd$ for $\Pot$ restricted to~$U$ is also a betweenness order for~$(U,T)$.
 
 \smallskip
 \noindent \textbf{The single-crossingness and \tiescness{} case.}
 Since for preferences without ties, single-crossingness is equivalent to \tiescness~(see \cref{prop:relation-sc-tsc}\eqref{statement:noties:tiesc=sc}) we only focus on the single-crossingness case.
Given an instance~$I=(U,T)$ of \textsc{Betweenness},
where $U=\{u_1,\ldots,u_n\}$ and $T=\{t_1,\ldots,t_m\}$, 
we construct a preference profile~$\Pot=(\succ_x)_{x\in V}$ for an agent set~$V$, whose preferences contain no ties but may be incomplete; without loss of generality we assume that the elements in each ordered triple~$(x,y,z)\in T$ are pairwise distinct.

The agent set~$V$ has two types of agents, which sum up to $n+3m$~agents.
 First, for each element~$u_i$ add to $V$ an \emph{element} agent with the same name.
 Second,  for each triple~$t_j=(x,y,z)$ from~$T$ (with~$x,y,z \in U$) add to $V$ three agents~$a_j,b_j, c_j$.
 Again, we will construct the preference orders of~$a_j, b_j$, and $c_j$ to ensure that the second element~$y$ is ordered between the other two elements~$x$ and $z$. 
 Before we describe how to construct the preference orders, let us describe the acceptable set of each agent:
 For each triple~$t_j=(x,y,z)$ from~$T$ (with~$x,y,z \in U$) let
 $$V(\succ_{a_j})=V(\succ_{b_j})=V(\succ_{c_j})\coloneqq \{x,y,z\}.$$
 Symmetrically, for each element agent~$u\in U$, let $V(u) \coloneqq \{a_j,b_j,c_j \mid u \in t_j\}$.
 For ease of notation, for each triple~$t_j\in T$, let $T_j \coloneqq \{a_j,b_j,c_j\}$.

 The preference orders of the agents are constructed as follows:
 \begin{itemize}
   \item For each triple~$t_j =(x,y,z)$ from~$T$, the preference orders of~agents $a_j$, $b_j$, and $c_j$ are the same linear order on $\{x,y,z\}$, which is ascending on the indices of $x,y,z$ in~$U$.
   \item For each element agent~$u\in U$, 
   construct the preference order~$\succ_u$ of $u$ such that for each two triples~$t_j$ and $t_{j'}$ with $j < j'$ that contain the element~$u$ it holds that
   \[ \succ_u\colon T_j \succ T_{j'}.\]
   The specific order on the triple agents in $T_j$ (resp.\ $T_{j'}$) depends on the position of $u$ in the triple~$t_j$.
   To this end, let
   \begin{align*}
     L_j^x & \coloneqq a_j \succ b_j \succ c_j,\\
     L_j^y & \coloneqq b_j \succ a_j \succ c_j,\\
     L_j^z & \coloneqq c_j \succ b_j \succ a_j.
   \end{align*}
   To make the preference order of $u$ a linear order, for each triple~$t_j\in T$ with $t_j=(x,y,z)$ that contains the element~$u$,
   let $\succ_u$ obey the order~$L_j^{k}$ if and only if $u=k$.
 \end{itemize}

 We show that $\Pot$ is single-crossing if and only if the given instance is a yes-instance.
 On the one hand, a single-crossing order~$\rhd$ for the constructed profile restricted to~$U$ is a betweenness
 order for~$(U,T)$: For every triple~$t_j=(x,y,z)$ from~$T$ the preference orders of
 $x$, $y$, and $z$ restricted to $T_j$ imply that either
 $x \rhd y \rhd z$ or $z \rhd y \rhd x$.
 On the other hand, a betweenness order~$\border$ for~$(U,S)$ can be extended to a single-crossing order for the
 constructed profile by appending the order~$\border a_1 \border b_1 \border c_1 \border \cdots a_m \border b_m \border c_m$
 to its end. 
\end{proof}

\citet{ST06} showed that for complete preferences without ties, 
 narcissistic and single-crossing preferences are also single-peaked.
We strengthen this result by showing that the relation also holds when ties are allowed.
Further, we note that although \citet{BaMo2011} also considered complete preferences with ties, their single-crossingness for the case with ties only resembles our \tiesc definition, 
which is a strict subset of the single-crossingness defined in \cref{def:sc} (also see \cref{prop:relation-sc-tsc}\eqref{statement:ties:sc-not=>tiesc}--\eqref{statement:ties:tiesc=>sc}).

\begin{proposition}\label[proposition]{thm:complete:nsc->sp}
  If a complete, even with ties, and narcissistic preference profile $\Pot$ admits a single-crossing order~$\rhd$, 
  then this order~$\rhd$ is also a single-peaked order. 
\end{proposition}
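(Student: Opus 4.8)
The plan is to show that the single-crossing order $\rhd$ certifies single-peakedness directly, the peak of each agent being itself by narcissism. First I would fix the strict linear extension $\Pot'=(\succ'_i)_{i\in V}$ that witnesses single-crossingness with respect to $\rhd$. Two facts about it are used repeatedly: since $\succ_i\subseteq\succ'_i$, we have that $x\succ'_i y$ implies $x\succeq_i y$; and narcissism carries over to $\Pot'$, i.e.\ $i\succ'_i w$ for every $w\neq i$. Because $\Pot$ is complete we have $V(\succeq_i)=V$ for every $i$, so the single-peaked condition has to be verified for all triples drawn from $V$.

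The heart of the argument is a \emph{prefers-closer} claim: if $y$ lies strictly between $i$ and $z$ along $\rhd$ (that is, $i\rhd y\rhd z$ or $z\rhd y\rhd i$), then $y\succeq_i z$. To see this, consider the pair $\{y,z\}$; by narcissism in $\Pot'$ we have $y\in V[y\succ' z]$ and $z\in V[z\succ' y]$. Single-crossingness forces one of these two sets to $\rhd$-precede the other, and the relative $\rhd$-position of the witnesses $y$ and $z$ determines which: in the case $i\rhd y\rhd z$ we get $V[y\succ' z]\rhd V[z\succ' y]$, since the reverse orientation would force $z\rhd y$. As $i\rhd y$ and the block containing $y$ $\rhd$-precedes the other, $i$ must lie in $V[y\succ' z]$, whence $y\succ'_i z$ and therefore $y\succeq_i z$. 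The case $z\rhd y\rhd i$ is symmetric.

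With the claim in hand, single-peakedness follows from a short case distinction on the position of the peak $i$. Take any $x\rhd y\rhd z$ in $V$ and assume $x\succ_i y$; we must derive $y\succeq_i z$. If $y\rhd i$, then $x\rhd y\rhd i$ and the claim applied to the pair $\{x,y\}$ yields $y\succeq_i x$, contradicting $x\succ_i y$, so this case cannot occur. If $i=y$, then narcissism gives $y=i\succ_i z$, hence $y\succeq_i z$. Finally, if $i\rhd y$, then $i\rhd y\rhd z$ and the claim immediately yields $y\succeq_i z$. Since $\rhd$ is a total order these three cases are exhaustive, which establishes the single-peaked condition.

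The main obstacle I anticipate is the bookkeeping inside the claim: one must check that the known memberships $y\in V[y\succ' z]$ and $z\in V[z\succ' y]$ together with the relative $\rhd$-position of $y$ and $z$ really do pin down the single-crossing orientation, and then that the peak $i$ is forced into the correct block. Handling ties adds only the mild subtlety that all of this is read off the strict extension $\Pot'$ and transferred back to $\succeq_i$ via $x\succ'_i y\Rightarrow x\succeq_i y$; since narcissism survives the extension and the single-peaked conclusion is merely the weak relation $\succeq_i$, no further complication arises.
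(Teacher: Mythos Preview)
Your proposal is correct and rests on the same key mechanism as the paper's proof: narcissism places $y\in V[y\succ' z]$ and $z\in V[z\succ' y]$, which together with the $\rhd$-positions of $y$ and $z$ fixes the single-crossing orientation and hence forces $i$ into the ``closer'' block. The only difference is packaging: the paper argues by contradiction (assuming a valley $a_j\succ_{a_i}a_k$, $a_\ell\succ_{a_i}a_k$ with $j<k<\ell$ and deriving $i<k$ and $i>k$ from the pairs $\{a_j,a_k\}$ and $\{a_k,a_\ell\}$), whereas you extract the ``prefers-closer'' claim as a reusable lemma and proceed directly by case analysis on the position of $i$ relative to $y$.
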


\begin{proof}
  Suppose for the sake of contradiction that $\rhd$ with $a_1\rhd a_2 \rhd \dots \rhd a_{2n}$ is not single-peaked.
  This means that there exists an agent~$a_i$ that is not single-peaked with respect to~$\rhd$,
  and there are three agents~$a_j,a_k,a_\ell$ with
  $a_j \rhd a_k \rhd a_{\ell}$ such that $a_j\succ_{a_i} a_k$ and $a_\ell \succ_{a_i} a_k$.
  Together with the property of being narcissistic, the following holds:
\begin{align*}
     \text{agent } a_i\colon  & a_i \succ_{a_i} a_{j} \succ_{a_i} a_{k} \text{ and } a_i \succ_{a_i} a_{\ell} \succ_{a_i} a_{k}, &\quad
    \text{agent } a_j\colon  & a_j \succ_{a_j}  a_k,\\
    \text{agent } a_k\colon  & a_k \succ_{a_k} a_j \text{ and } a_k \succ_{a_k} a_{\ell}, &\quad
    \text{agent } a_\ell\colon  &a_\ell \succ_{a_\ell} a_k\text{.}
\end{align*}

  \noindent On the one hand, the agents' preferences over the pair~$\{a_j, a_k\}$ implies that $i < k$. 
  On the other hand, the pair~$\{a_k, a_\ell\}$ implies that $i > k$---a contradiction. 
\end{proof}

The profile shown in \cref{fig:N-SC-SP} is narcissistic and single-crossing
with respect to the order $1\rhd 2 \rhd 3 \rhd 4$ and it is also single-peaked with respect to the same order~$\rhd$.

There is no direct relation between single-peakedness and single-crossingness, even if the profile is complete and do not contain ties~\cite{BreCheWoe2016}.

\begin{proposition}
  Even for complete preference profiles without ties,
  single-peaked\-ness does not imply single-crossingness.
  Neither does single-crossingness imply single-peakedness.
\end{proposition}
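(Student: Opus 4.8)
The plan is to prove both non-implications by exhibiting explicit counterexample profiles, since the statement is a separation result citing \cite[Figure 1]{BreCheWoe2016} and such claims are established by small witnesses rather than structural arguments. For each direction I would construct a complete, tie-free profile on a small agent set (four agents should suffice, and is known to be the minimal interesting size since every profile on three alternatives is both single-peaked and single-crossing), verify the property the profile is supposed to have, and then argue that the other property fails.

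For the direction that single-peakedness does not imply single-crossingness, I would take a profile that is single-peaked with respect to some order $\rhd$ but whose pairwise ``crossing'' constraints are mutually unsatisfiable. Concretely, the profile already drawn in \cref{fig:restricted-profiles:NSP-notTSC} is the natural candidate: it is narcissistic and single-peaked, and the caption there already records the argument that it is not single-crossing. The key step is to turn that caption into a clean proof: for each unordered pair $\{x,y\}$, the agents split into $V[x\succ y]$ and $V[y\succ x]$, and single-crossingness with respect to a linear order $\rhd$ forces these two blocks to be contiguous and separated along $\rhd$, which in turn forces certain pairs of agents to be adjacent in $\rhd$. I would extract from the pairs $\{1,4\}$ and $\{2,3\}$ the adjacency requirements (agent~$1$ next to agent~$3$, agent~$2$ next to agent~$4$, agent~$1$ next to agent~$2$, agent~$3$ next to agent~$4$) and observe that these four adjacency constraints describe a $4$-cycle on the four agents, which no linear order on four elements can realize.

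For the converse direction that single-crossingness does not imply single-peakedness, I would exhibit a complete tie-free profile that is single-crossing with respect to some order but for which no single-peaked order exists. The argument that it is single-crossing is direct: display a linear order $\rhd$ on the agents and check, pair by pair, that the agents preferring one alternative to the other form a prefix of $\rhd$. The argument that it is \emph{not} single-peaked is the more delicate half: I would show that no linear order $\rhd'$ on the agents can serve as a single-peaked axis, typically by finding a single agent~$i$ together with three agents~$a,b,c$ such that $i$'s preferences over $\{a,b,c\}$ forbid $b$ from lying between $a$ and $c$ under every candidate axis, and combining several such ``valleys'' so that the betweenness constraints they impose are jointly inconsistent.

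The main obstacle is the non-single-peakedness verification in the second direction: unlike single-crossingness, where exhibiting one good order suffices, ruling out single-peakedness requires arguing over \emph{all} possible axes. The clean way to discharge this is to reduce it to betweenness constraints---each agent's preference order dictates, for its top alternative and any two others, a forbidden ``middle'' element, equivalently a required betweenness relation on the axis---and then show the resulting system of betweenness constraints admits no total order (for instance by producing a cyclic chain of forced ``between'' relations). Since the statement is attributed to \cite[Figure 1]{BreCheWoe2016}, I would lift the concrete four-agent profile from that reference for both halves, so the entire proof reduces to the two finite verifications above, with the betweenness-inconsistency check being the only step requiring genuine care.
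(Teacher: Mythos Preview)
The paper does not actually prove this proposition: it is stated with the citation \cite[Figure~1]{BreCheWoe2016} and no proof follows---the next line begins Section~3. So there is no ``paper's own proof'' to compare against; the authors defer entirely to the cited reference.

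Your plan is nonetheless exactly how one establishes such a separation, and it is correct. For the first direction you rightly observe that the profile in \cref{fig:restricted-profiles:NSP-notTSC} already does the job, and the adjacency-constraint argument you sketch (the pairs $\{1,4\}$ and $\{2,3\}$ force a $4$-cycle of neighbour relations that no linear order can realize) is precisely the reasoning recorded in that figure's caption. For the second direction your strategy---exhibit a single-crossing order and then rule out every single-peaked axis via inconsistent betweenness constraints---is the standard and correct method; the concrete profile would indeed have to be taken from \cite{BreCheWoe2016}, since the present paper does not display one. One small inaccuracy: your parenthetical that ``every profile on three alternatives is both single-peaked and single-crossing'' is not true in general (a Condorcet-cycle profile on three alternatives is neither), so you should drop or qualify that remark, but it plays no role in the argument itself.
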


\begin{proof}
  The statements are indicated in~\cite[Figure 1]{BreCheWoe2016}. For the sake of completeness, however, we show the statements through the following two concrete profiles.

Consider the following profile with four agents~$1,2,3,4$:
\begin{alignat*}{5}
  \text{Profile~}&&\Pot_1\colon &  \agent~1\colon &&1 \succ 2 \succ 3 \succ 4,\quad &~~&
    \agent~2\colon &&2 \succ 1 \succ 3 \succ 4, \\
 & &&  \agent~3\colon &&1 \succ 2 \succ 4 \succ 3, &&
    \agent~4\colon &&2 \succ 1 \succ 4 \succ 3.
  \end{alignat*}

  One can verify that profile~$\Pot_1$ has complete preferences without ties,
  and is single-peaked with respect to the linear order~$3 \rhd 1 \rhd 2 \rhd 4$.
  However, it is not single-crossing because agents~$1$, $2$, $3$, and~$4$ form a forbidden configuration of single-crossingness~\cite{BreCheWoe2013a}.
   



Consider the following profile with four agents~$1,2,3,4$:
\begin{alignat*}{4}
  \text{Profile~}\Pot_2\colon\quad &  \agent~1\colon &&1 \succ 2 \succ 3 \succ 4,\quad &~~&
    \agent~2\colon &&1 \succ 2 \succ 3 \succ 4, \\
  &  \agent~3\colon &&1 \succ 4 \succ 3 \succ 2, &&
    \agent~4\colon &&1 \succ 4 \succ 3 \succ 2.
 \end{alignat*}
Since there are only two different preference orders,
profile~$\Pot_2$ is obviously single-crossing, for instance, with respect to the linear order~$1\rhd2\rhd3\rhd4$.
However, it is not single-peaked since agent~$1$ and agent~$3$ form a forbidden configuration of single-peakedness~\cite{BaHa2011}.
\end{proof}

\section{Complete preferences}\label{sec:complete}
In this section, we analyze the computational complexity of \SR{} for the case
when the input preference profiles have complete and structured preferences.
In particular, we show that the 
NP-hard \SR{} problem with ties allowed becomes
polynomial-time solvable when the preferences are narcissistic and either single-crossing or single-peaked.

For the case of complete, narcissistic, and single-peaked preferences without ties,
\citet{BartholdiTrick1986} showed that there is always a unique stable matching and provided an algorithm to find it.
Their algorithm is based on the following two facts (referred to as \cref{prop:nsp->twomost,prop:stablematching-augmented})
that are related to the concept of most acceptable agents.
We show that the facts transfer to the case with ties.

\begin{proposition}\label[proposition]{prop:nsp->twomost}
  If the given preference profile~$\Pot$ is complete (even with ties), narcissistic, and single-peaked,
  then there are two distinct agents~$i,j$ that are each other's most acceptable agents. 
\end{proposition}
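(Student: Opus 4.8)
The plan is to exploit the single-peaked order to show that every ``most acceptable'' partner of an agent must be one of its immediate neighbours along that order, and then to argue that two consecutive agents must point at each other. Fix a single-peaked order~$\rhd$ and write it as $a_1 \rhd a_2 \rhd \cdots \rhd a_{2\cdot n}$. Since $\Pot$ is narcissistic, every agent finds itself acceptable and strictly prefers itself to all others, so the peak of each $a_i$ is $a_i$ itself; in particular $a_i$ sits at its own position in $\rhd$.

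First I would establish the \emph{neighbour fact}: for an interior agent $a_i$, every most acceptable agent of $a_i$ lies in $\{a_{i-1}, a_{i+1}\}$, while for the endpoints $a_1$ and $a_{2\cdot n}$ the unique neighbour $a_2$ resp.\ $a_{2\cdot n-1}$ is most acceptable. For the right side I apply the single-peaked condition to the triple $a_i \rhd a_{i+1} \rhd a_{i+k}$: since narcissism gives $a_i \succ_{a_i} a_{i+1}$, the definition yields $a_{i+1} \succeq_{a_i} a_{i+k}$ for every $k \ge 1$. For the left side I use the same condition in contrapositive form on the triple $a_{i-k} \rhd a_{i-1} \rhd a_i$: were $a_{i-k} \succ_{a_i} a_{i-1}$, single-peakedness would force $a_{i-1} \succeq_{a_i} a_i$, contradicting narcissism, so $a_{i-1} \succeq_{a_i} a_{i-k}$. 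Hence among all agents other than $a_i$ the weakly best ones sit at $a_{i-1}$ or $a_{i+1}$, which is exactly the neighbour fact. Using the general (non-consecutive) triple form of the definition is what makes this go through cleanly even in the presence of ties.

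With the neighbour fact in hand, I would finish by a collision argument. Suppose, for contradiction, that no two consecutive agents are each other's most acceptable agents. The endpoint $a_1$ has $a_2$ as its most acceptable agent, so for the pair $\{a_1, a_2\}$ to fail we must have that $a_1$ is not most acceptable for $a_2$; by the neighbour fact this means $a_3 \succ_{a_2} a_1$, so $a_3$ is most acceptable for $a_2$. Iterating, I claim by induction that $a_{i+1}$ is most acceptable for $a_i$ for every $i \le 2\cdot n - 1$: given this for $a_i$, failure of the pair $\{a_i, a_{i+1}\}$ forces $a_{i+2} \succ_{a_{i+1}} a_i$ (when $a_{i+1}$ is interior), which hands the property to $i+1$. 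At $i = 2\cdot n - 1$ this yields that $a_{2\cdot n}$ is most acceptable for $a_{2\cdot n-1}$, while $a_{2\cdot n-1}$ is trivially most acceptable for the endpoint $a_{2\cdot n}$; thus $\{a_{2\cdot n-1}, a_{2\cdot n}\}$ is a mutually most acceptable pair, contradicting the assumption.

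The main obstacle I anticipate is handling ties correctly: I must make sure that ``most acceptable'' is read as weakly-best (so that an agent may have several), that the single-peaked inequalities above are only weak, and that the induction nonetheless propagates a \emph{strict} preference $a_{i+2} \succ_{a_{i+1}} a_i$ at each step. This strictness is precisely what the failure of a pair supplies, and it is what keeps the chain moving rightward until it is forced to collide at the right boundary.
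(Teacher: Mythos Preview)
Your argument is correct and follows essentially the same route as the paper: both fix the single-peaked order, use narcissism plus single-peakedness to pin the most acceptable partners next to each agent, and then push an induction from the left endpoint to force a collision at the right endpoint. One wording to tighten: with ties the statement ``every most acceptable agent of $a_i$ lies in $\{a_{i-1},a_{i+1}\}$'' is not literally true (e.g.\ $a_{i+2}$ could be tied with $a_{i+1}$); what your computation actually yields---and all you later use---is that at least one of $a_{i-1},a_{i+1}$ is most acceptable, which is the interval observation the paper states as ``$M_x\cup\{x\}$ forms an interval in~$\rhd$''.
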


\begin{proof}
  The statement for complete, narcissistic, and single-peaked preferences without ties was shown by \citet{BartholdiTrick1986}.
  It turns out that this also holds for the case when ties are allowed.
  Let $V$ be the set of all $2n$~agents and consider a single-peaked order~$\rhd$ of the agents~$V$ with $x_1\rhd x_2\rhd \dots \rhd x_n$.
  For each agent~$x\in V$, let $M_x$ be the set of all most acceptable agents of~$x$.
  Towards a contradiction, suppose that each two distinct agents~$x$ and $y$ 
have $x\notin M_y$ or $y\notin M_x$.
  By the narcissistic property and single-peakedness, each~$M_x\cup \{x\}$ forms an interval in $\rhd$.
  This implies that the first agent~$x_1$ and the last agent~$x_n$ in the order~$\rhd$ have $x_2\in M_{x_1}$ and $x_{n-1} \in M_{x_n}$.
  By our assumption ($x\notin M_y$ or $y\notin M_x$), however,
  $x_2 \in M_{x_1}$ implies that for each $i\in \{2,\dots, n\}$
  the following holds: $x_{i-1}\notin M_{x}$---a contradiction to $x_{n-1}\in M_{x_n}$.  
\end{proof}

By the stability definition, we obtain the following for complete preferences.
\begin{proposition}\label[proposition]{prop:stablematching-augmented}
  Let $\Pot$ be a preference profile with complete preferences and let $M$ be a stable matching for $\Pot$.
  Let $\Pot'$ be a preference profile resulting from $\Pot$ by adding two agents~$x,y$ who consider each other as most acceptable (and their preferences over other agents and the preferences of other agents over $x,y$ are arbitrary but fixed, respectively).
  Then, matching~$M\cup \{\match{}{x}{y}\}$ is stable for~$\Pot'$.
\end{proposition}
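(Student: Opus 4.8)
The plan is to verify the stability of $M':= M \cup \{\match{}{x}{y}\}$ directly from the definition of a blocking pair, ruling out every candidate. Since $\Pot$ is complete, its stable matching $M$ is perfect, and hence $M'$ is a perfect matching of $\Pot'$; in particular $M'(a)\neq\bot$ for every agent $a$, so the $M'(\cdot)=\bot$ disjuncts in the blocking condition never trigger. It therefore suffices to show that no unmatched pair $\{a,b\}\in E(G')\setminus M'$ satisfies both $b\succ_a M'(a)$ and $a\succ_b M'(b)$. I would organize the verification as a case distinction according to how the pair $\{a,b\}$ meets the two new agents $\{x,y\}$.

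First, if $\{a,b\}=\match{}{x}{y}$, then $\{a,b\}\in M'$, so it is not an unmatched pair and cannot block. Second, suppose exactly one endpoint is new, say $a=x$ and $b\in V$ (the case $a=y$ and the symmetric roles of $a,b$ are identical). A blocking condition would require $b\succ_x M'(x)=y$. But $y$ is a most acceptable agent of $x$, which by definition means $y\succeq_x z$ for every agent $z\neq x,y$; taking $z=b$ gives $y\succeq_x b$, directly contradicting $b\succ_x y$. Hence no pair with an endpoint in $\{x,y\}$ can block.

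The remaining case is $a,b\in V$. Here $M'(a)=M(a)$ and $M'(b)=M(b)$, and, crucially, passing from $\Pot$ to $\Pot'$ only inserts the new agents $x,y$ somewhere into the existing orders and leaves the relative ranking of any two old agents untouched. Consequently $b\succ_a M(a)$ and $a\succ_b M(b)$ already hold in $\Pot$, so $\{a,b\}$ would block $M$ in $\Pot$, contradicting the stability of $M$. Since no unmatched pair blocks $M'$, it is stable for $\Pot'$.

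The argument is short and essentially bookkeeping; the one point that needs care is the second case, where the full force of the hypothesis ``$x$ and $y$ are each other's most acceptable agents'' is used. I would make sure to read ``most acceptable'' as a \emph{weak} preference ($y\succeq_x z$), so that the strict inequality $b\succ_x y$ demanded by a blocking pair is impossible even in the presence of ties. The only other thing I would keep explicit is that extending $\Pot$ to $\Pot'$ preserves all pairwise comparisons within $V$, which is precisely what allows the third case to be pushed back to the stability of $M$.
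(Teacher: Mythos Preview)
Your proposal is correct and follows essentially the same approach as the paper: a case distinction on how a hypothetical blocking pair meets $\{x,y\}$, with the old--old case pushed back to the stability of $M$ and the mixed case ruled out because $y$ (resp.\ $x$) is most acceptable. Your write-up is in fact slightly more careful than the paper's, which compresses the three cases into one sentence and does not separate the trivial case $\{a,b\}=\{x,y\}$ explicitly.
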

\begin{proof}
  Suppose for the sake of contradiction that $M\cup \{\match{}{x}{y}\}$ is not stable for~$\Pot'$.
  This means that $\Pot'$ has a blocking pair~$\{u,w\}\notin M$.
  Obviously, $|\{u,w\}\cap \{x,y\}| = 1$ as otherwise 
  $\{u,w\}$ would also be a blocking pair for~$\Pot$.
  Assume without loss of generality that $u = x$.
  Then, by the definition of blocking pairs, it must hold that $w \succ_x y$---a contradiction to $y$ being one of the most acceptable agents of~$x$. 
\end{proof}

\begin{algorithm}[t!]
   \footnotesize
  \SetKwInput{KwI}{Input}
  \SetKwFunction{BTAlg}{BT-Agorithm}
   \SetKwBlock{Block}
   \SetAlCapFnt{\footnotesize}
     $M\leftarrow \emptyset$\;
     
   \While{$\Pot\neq\emptyset$}{
     Find two agents~$x,y$ in $\Pot$ that consider each other as most acceptable\;
     Delete $x$ and $y$ from profile~$\Pot$\;
     $M\leftarrow M\cup \{\match{}{x}{y}\}$\;
   }
   \Return $M$\;
\caption{The algorithm of \citet{BartholdiTrick1986} for computing a stable matching with input~$\Pot$ being complete, narcissistic, and single-peaked, without ties.}\label{alg:BT}
 \end{algorithm}

Utilizing restricted versions of \cref{prop:nsp->twomost,prop:stablematching-augmented},
\citet{BartholdiTrick1986} derived a greedy algorithm to construct a \emph{unique} stable matching when the preferences are linear orders (i.e., complete and without ties) and are narcissistic and single-peaked (see \cref{alg:BT}). 
For $2n$~agents they claimed that their algorithm runs in $O(n)$ time by observing that, after adding a pair to the solution matching, at most two other agents need to update their most acceptable agent.
However, since updating the most acceptable agent may take~$O(n)$~time,
it is not clear how to implement their algorithm to achieve $O(n)$ running time.
Indeed, by generalizing their example~\cite[Section 3]{BartholdiTrick1986} to the case with $2n$ agents,
we show the following: Assuming, as input, an integer vector for each agent is given, which lists the agents in order of its preferences, their algorithm may take $O(n^2)$~time, even if the single-peaked order is known.

\begin{example}\label{ex:BT-n^2}
  We consider a complete, narcissistic, single-peaked, and single-crossing preference profile~$\Pot_n$ without ties, for $2n$~agents.
  The preference orders will be pasted together from the following ``preference pieces''~$X_i$ and $Y_i$ with $i \in [n]$:
  \begin{align*}
    X_i & \coloneqq  i+1 \succ i+2 \succ \cdots \succ 2n-i, \text{ and }\\
    Y_i & \coloneqq  i \succ 2n+1-i.
  \end{align*}
  Note that for every $i\in [n]$, the piece~$X_i$ covers contiguous interval of~$2n-2i$~alternatives.
  Together with the pieces~$Y_1$, $Y_2$, $\ldots$, $Y_{i-1}$, all but two alternatives are covered: $i$ and $2n+1-i$ are the only alternatives that are not covered.

  Now, let us define the preference orders of the agents in the profile.
  For each $i\in [n]$ the two agents~$i$ and $2n+1-i$ have ``almost'' opposite preferences:

  \smallskip
 \noindent \begin{tabular}{l@{\;}c@{\;}c@{\;}c@{\;}c@{\;}c@{\;}c@{\;}c@{\;}c@{\;}c@{\;}c@{\;}c@{\;}c@{\;}c@{\;}c@{\;}c@{\;}c@{\;}c@{\;}c}
    $\agent~i \colon$ & $i$ & $\succ$ &$X_i$ & $\succ$ & $2n+1-i$ &  $\succ$ & $Y_{i-1}$ & $\succ$ & $Y_{i-2}$ & $\succ$ & $\cdots$ & $\succ$ & $Y_1$,\\
    $\agent~{2n+1-i} \colon$ & $2n+1-i$ & $\succ$ & $\overleftarrow{X_i}$ & $\succ$ & $i$ & $\succ$ & $\overleftarrow{Y_{i-1}}$ & $\succ$ & $\overleftarrow{Y_{i-2}}$ & $\succ$ & $\cdots$ & $\succ$ & $\overleftarrow{Y_1}$,\\
  \end{tabular}
  
\noindent  where $\overleftarrow{X_i}$ and $\overleftarrow{Y_j}$ ($j\in [i-1]$) denote the reverse preference orders of ${X_i}$ and~${Y_j}$, respectively.

Observe that, for $i = 1$, the preference pieces~$Y_{i-1}\succ Y_{i-2} \succ \ldots \succ Y_1$ and~$\overleftarrow{Y_{i-1}}\succ \overleftarrow{Y_{i-2}} \succ \ldots \succ \overleftarrow{Y_1}$ are empty.
Thus, the first agent and the last agent have reverse preference orders~$1\succ 2 \succ \cdots \succ 2n$ and $2n\succ 2n-1 \succ \cdots \succ 1$, respectively.
  
  For $n=3$, the corresponding preference profile~$\Pot_3$ looks as follows:
  \begin{quote}
    \begin{tabular}{@{\;\quad}l@{\;}c@{\;}c@{\;}c@{\;}c@{\;}c@{\;}c@{\;}c@{\;}c@{\;}c@{\;}c@{\;}c@{\;}c@{\;}c@{\;}c@{\;}c@{\;}c@{\;}c@{\;}c}
    $\agent~1\colon$ & $1$ & $\succ$ & $2$ & $\succ$ & $3$ & $\succ$ & $4$ & $\succ$ & $5$ & $\succ$ & $6$,\\
    $\agent~2\colon$ & $2$ & $\succ$ & $3$ & $\succ$ & $4$ & $\succ$ & $5$ & $\succ$ & $1$ & $\succ$ & $6$,\\
    $\agent~3\colon$ & $3$ & $\succ$ & $4$ & $\succ$ & $2$ & $\succ$ & $5$ & $\succ$ & $1$ & $\succ$ & $6$,\\
    $\agent~4\colon$ & $4$ & $\succ$ & $3$ & $\succ$ & $5$ & $\succ$ & $2$ & $\succ$ & $6$ & $\succ$ & $1$,\\
    $\agent~5\colon$ & $5$ & $\succ$ & $4$ & $\succ$ & $3$ & $\succ$ & $2$ & $\succ$ & $6$ & $\succ$ & $1$,\\
    $\agent~6\colon$ & $6$ & $\succ$ & $5$ & $\succ$ & $4$ & $\succ$ & $3$ & $\succ$ & $2$ & $\succ$ & $1$.
  \end{tabular}
\end{quote}
  
\paragraph{Single-peakedness and single-crossingness.} Profile~$\Pot_n$ is single-peaked and single-crossing with respect to the canonical order~$1\rhd 2 \rhd \ldots \rhd 2n$.
 To see why it is single-crossing with respect to this order, we observe the following, which covers all possible pairs of agents:
 \begin{itemize}
   \item For each agent~$i \in [n-1]$ and each agent~$j \in [2n-i]\setminus [i]$, all agents from $[i]$ prefer $i$ to~$j$ while all agents from $[2n]\setminus [i]$ prefer $j$ to~$i$.
   \item For each agent~$k \in [2n]\setminus [n+1]$ and each agent $j\in [k-1]\setminus [2n+1-k]$, 
     all agents from $[k-1]$ prefer $j$ to $k$ while all agents from $[2n]\setminus [k-1]$ prefer $k$ to~$j$.
   \item For each agent~$z \in [n]$, all agents from $[n]$ prefer $z$ to $2n+1-z$ while all agents from $[2n]\setminus [n]$ prefer~$2n+1-z$ to~$z$.
 \end{itemize}
 This profile contains a unique stable matching~$M$ with
 \begin{align*}
   M = \{\{i,2n+1-i\} \mid i \in [n]\}.
 \end{align*}

 \paragraph{Running time of Bartholdi III and Tricks' algorithm~(see \cref{alg:BT}).}
 First, observe that due to the preference orders of agents~$1$ and~$2n$, the canonical order~$\rhd$ as described above is the unique single-peaked order (up to reversal).
 Hence, let us assume that the preference orders of~$\Pot_n$ and the order~$\rhd$ are given as input.
  Applying the algorithm of \citet{BartholdiTrick1986}~(see \cref{alg:BT}), in the first round,
  each agent~$i\in [n]$ regards its ``successor agent''~$i+1$ (along~$\rhd$) as the most acceptable agent while each agent~$k\in [2n]\setminus [n]$ regards its ``predecessor agent''~$k-1$ (along~$\rhd$) as the most acceptable agent.
  Thus, in this round, agents~$n$ and $n+1$ are the only agents that regard each other as most acceptable.
  We find them in $O(n)$~time and add $\{n,n+1\}$ to~$M$.

  In the second round, agents~$i\coloneqq n-1$ and $j\coloneqq n+2$ have to update their most acceptable agents since their currently most acceptable ones are matched with each other.  
  But the updating time of each agent depends on the number of pairs that are already in the solution~$M$ no matter what $\rhd$ looks like since in general the ``successor'' or ``predecessor'' of $i$ or $j$ might already have been matched in some previous round.
  For agents~$n-1$ and $n+2$, we need to traverse to the third agent in their preference orders to find one which are unmatched.
  In this case, $n-1$ and $n+2$ find each other most acceptable after $\{n,n+1\}$ is added to~$M$.
  
  For our profile, in round~$z$, starting from $z=2$,
  we need to update the most acceptable agents of agents~$n-z+1$ and $n+z$ since the pair~$\{n-z+2, n+z-1\}$ is added to~$M$ in round~$z-1$.
  To update, we need to go through the preference orders of~$n-z+1$ and~$n+z$ until we find a next most acceptable and not-yet-matched agent.
  This takes $O(z)$~time because both agents~$n-z+1$ and $n+z$ have not updated their most acceptable agents but prefer all matched agents to their next most acceptable and not-yet-matched one.
  Summarizing, we need $O(n+2+4+\ldots+2n)=O(n^2)$~time to construct a unique stable matching, if we assume that the input contains the preference orders of all agents (implemented as integer vectors) and a single-peaked order of the agents.
\end{example}

For the case with ties, 
we will show that \cref{alg:BT} also works for narcissistic, single-peaked profiles.
In particular, there is always a stable matching, albeit perhaps not unique. 

\begin{proposition}\label{prop:>1-SM}
  A complete, narcissistic, single-peaked, and (tie-sensitive) single-crossing preference profile with ties may admit more than one stable matching.
\end{proposition}

\begin{proof}
  To show the statement, let us consider the following profile:
  \begin{profile}{rccccccc}
   \agent~1 \colon & 1 & \succ & 2 & \indif & 3 & \indif & 4,\\
   \agent~2 \colon & 2 & \succ & 1 & \indif & 3 & \indif & 4,\\
   \agent~3 \colon & 3 & \succ & 1 & \indif & 2 & \indif & 4,\\
   \agent~4 \colon & 4 & \succ & 1 & \indif & 2 & \indif & 3.\\
 \end{profile}

 Since each agent regards every other agent as equally good, any perfect matching is a stable matching.
 Since the profile is complete and has four agents, there are $\binom{4}{2}/2=3$ perfect matchings which are all stable.
 It is also straight-forward to verify that the profile is narcissistic, single-peaked, and (tie-sensitive) single-crossing with respect to the order~$1\rhd 2 \rhd 3 \rhd 4$.
\end{proof}

 Next, we show that the idea behind Bartholdi III and Trick's algorithm~(see \cref{alg:BT}) also works for the case with ties.
 
\begin{theorem}\label[theorem]{thm:complete-ties-nsp-qudratic}
  If a preference profile with $2n$~agents is complete, possibly with ties, narcissistic and single-peaked,
  then it always admits a stable matching,
  which can be found 
  in $O(n^2)$ time.
\end{theorem}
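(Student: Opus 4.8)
The plan is to treat correctness and running time separately, with correctness resting entirely on the two structural facts \cref{prop:nsp->twomost} and \cref{prop:stablematching-augmented} proved above. The key preliminary observation is that the three relevant properties are \emph{inherited} under deletion of agents: if $\Pot$ is complete, narcissistic, and single-peaked with single-peaked order $\rhd$, then for every subset $V'\subseteq V$ the restriction of $\Pot$ to $V'$ is again complete, narcissistic, and single-peaked, witnessed by the restriction of $\rhd$ to $V'$. (Completeness and narcissism are immediate, and single-peakedness is preserved because the defining condition only quantifies over triples of agents, all of which survive in any sub-order.) Consequently \cref{prop:nsp->twomost} applies not just to $\Pot$ but to every profile produced by \cref{alg:BT} after some pairs have already been removed. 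This guarantees that the search in line~3 always succeeds, so the algorithm is well defined, terminates after exactly $n$ iterations, and returns a perfect matching $M=\{\match{}{x_1}{y_1},\dots,\match{}{x_n}{y_n}\}$, where $\match{}{x_k}{y_k}$ is the pair deleted in iteration~$k$.

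To show that $M$ is stable for $\Pot$, I would run a \emph{reverse} induction along the sequence of profiles generated by the algorithm. Let $\Pot_{k-1}$ be the profile at the start of iteration~$k$ (so $\Pot_0=\Pot$ and $\Pot_n=\emptyset$), and let $M_k=\{\match{}{x_{k+1}}{y_{k+1}},\dots,\match{}{x_n}{y_n}\}$ be the matching of all pairs removed after iteration~$k$; note $M_k$ is a perfect matching of $\Pot_k$. The base case $M_n=\emptyset$ is trivially stable for $\Pot_n=\emptyset$. For the inductive step, assume $M_k$ is stable for $\Pot_k$. The profile $\Pot_{k-1}$ arises from $\Pot_k$ by adding back the two agents $x_k,y_k$, and by the choice made in iteration~$k$ these two agents consider each other most acceptable \emph{in} $\Pot_{k-1}$. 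This is exactly the hypothesis of \cref{prop:stablematching-augmented}, with $\Pot_k$ playing the role of the smaller profile and $\Pot_{k-1}$ that of $\Pot'$ (both are complete by the inheritance observation). It follows that $M_{k-1}=M_k\cup\{\match{}{x_k}{y_k}\}$ is stable for $\Pot_{k-1}$. Carrying the induction down to $k=0$ yields that $M=M_0$ is stable for $\Pot_0=\Pot$, proving simultaneously that a stable matching exists and that \cref{alg:BT} finds one.

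For the running time, the loop executes $n$ times, and in iteration~$k$ there are two non-trivial operations, each of which I would bound by $O(n)$. First, a mutually most-acceptable pair can be located by following a chain of most-acceptable agents: starting from an arbitrary agent and repeatedly moving to a most-acceptable agent of the current one, single-peakedness together with narcissism forces the chain to advance toward an end of the single-peaked order, so it closes into a mutual pair after $O(n)$ steps (the order itself is used only in the analysis, not in the computation). Second, after fixing $x_k,y_k$ one deletes them and updates the remaining agents' preference lists and current top tiers, which touches $O(n)$ entries. Summing over the $n$ iterations gives the claimed $O(n^2)$ bound; this per-iteration preference update is precisely the step absent in the tie-free algorithm of \citet{BartholdiTrick1986} and is what raises the complexity from $O(n)$ to $O(n^2)$. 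The mathematical content is routine once \cref{prop:nsp->twomost,prop:stablematching-augmented} are in hand, so I expect the main obstacle to be the running-time analysis rather than correctness: one must argue carefully that, even though ties enlarge most-acceptable sets from singletons to arbitrary top tiers, both locating the mutual pair and maintaining the top tiers after each deletion can still be carried out within $O(n)$ per iteration.
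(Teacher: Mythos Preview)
Your proposal is correct and follows essentially the same approach as the paper: correctness via \cref{prop:nsp->twomost,prop:stablematching-augmented} together with the observation that completeness, narcissism, and single-peakedness are preserved under agent deletion, and the $O(n^2)$ bound from $n$ iterations each costing $O(n)$ for updating the most-acceptable sets. Your write-up is in fact more explicit than the paper's (which compresses the correctness argument into one sentence and does not spell out how the mutual pair is located); the only point that would benefit from a bit more care is the chain argument for the find step, since with ties an \emph{arbitrary} choice from a top tier need not move monotonically along~$\rhd$---but the proof of \cref{prop:nsp->twomost} already shows that a mutual pair of \emph{consecutive} agents in~$\rhd$ exists, which gives an easy $O(n)$ scan once the order is known.
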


\begin{proof}
  To show that such a profile~$\Pot$ always admits a stable matching, we show that on input~$\Pot$ \cref{alg:BT} always returns a matching of $\Pot$ which is stable.
  Indeed, the latter follows directly from \cref{prop:nsp->twomost,prop:stablematching-augmented} 
  and the narcissistic and single-peaked property is preserved after deleting any agent. 

  As for the running time, 
  there are $n$~rounds to build up~$M$, and 
  in each round we find two distinct agents~$x$ and $y$ whose most acceptable agent sets~$M_x$ and $M_y$ (in the updated profile) include $x$ respectively~$y$:
  \begin{align}\label{cond:good-pair}
    x\in M_y \text{ and } y\in M_x.
  \end{align}
  Note that \cref{prop:nsp->twomost} implies that two such agents exist.
  One way to search for two such agents is to go through all pairs~$\{x,y\}$ and check whether Condition~\eqref{cond:good-pair} holds.
  This leads to an algorithm with running time~$O(n^3)$.
  To obtain an $O(n^2)$-time algorithm, however, we observe that we only need to check an agent pair~$\{x,y\}$ upon Condition~\eqref{cond:good-pair}
  when the most acceptable agents of one of $x$ and $y$, say $x$, are all matched (in previous rounds)
  such that $y$ is the next less preferred agent by~$x$.
  This guarantees that we have to check each pair of agents at most twice, once for $x$ and once for $y$.
  In total, we can find a stable matching in $O(n^2)$ time.

  We describe this idea in \cref{alg:BText} (see \CcomputeSM()), which uses only simple data structures such as integer matrices and vectors.  
  As input, we assume that for each agent~$x$ there is an integer vector~$L_x$ that lists the agents in order of the preferences of~$x$, with ties resolved in an arbitrary way. 
  Moreover, we assume to have an $n \times n$ integer rank matrix $R$ where~$R[x,y]$ contains the rank of agent~$x$ in the preference list of agent~$y$. 
  Herein, the \emph{rank} of agent~$y$ in agent~$x$'s preference list is defined as the number of
  agents that are strictly preferred to~$y$ by $x$.
  For example, if the preference list of agent~$1$ is $1 \succ 2 \succ 3 \indif 4 \succ 5 \indif 6 \succ 7$,
  then we may have $L_1=[1,2,3,4,5,6,7]$
  and have $R[1,1] = 0$, $R[1,2]=1$, $R[1,3]=2$, $R[1,4]=2$, $R[1,5]=4$, $R[1,6]=4$, and $R[1,7]=6$.

 \begin{algorithm}[t!]

   \SetKwInput{KwI}{Input}
   \SetKwInput{KwO}{Output}
 
   \SetKwProg{Fn}{}{}{end}
   \SetKwBlock{Block}

   \SetAlCapFnt{\footnotesize}
   \KwI{A 
    profile~$\Pot$ for a set~$V=\{1,2,\ldots, 2n\}$ of agents with complete, narcissistic, and single-peaked preferences
  }
  \KwO{A stable matching~$M$ for~$\Pot$}

  
  \Comment{%
    \textbf{Input data structures}:
  \begin{itemize}[-]
    \item a preference list~$L_x$ for each agent~$x\in V$, and
    \item an $n \times n$ integer rank matrix~$R$, with $R[x,y]=$ rank of agent~$y$ in agent~$x$'s preferences.
  \end{itemize}
  $\rhd$ \textbf{Introduced data structures:}
  \begin{itemize}[-]
    \item an $n \times n$ binary matrix~$A$, 
    \item[] ~with $A[x,y]=1$ iff agent~$y$ is currently one of the most acceptable agents of $x$;
   \item  a size-$n$ integer vector~$B$, 
   \item[] ~with $B[x]$ storing the number of most acceptable agents of~$x$ that are not yet matched;
   \item
   a queue $Q$~of `possible matching events', containing unordered pairs of agents;
   \item a size-$n$ binary vector~$D$, 
   with $D[x]=1$ iff agent~$x$ has already been matched.
   \item a size-$n$ integer vector~$\pointer\in [n+1]^n$,
   \item[] ~with $L_z[\pointer[z]]$ being the `first' agent not from the most acceptable set of $z$.
 \end{itemize}
}

  \Fn{\computeStableMatching{$(L_x)_{x\in V}, R$}}{
    $M\leftarrow \emptyset$; $A\leftarrow 0^{n\times n}$; $B \leftarrow 0^{n}$; $Q \leftarrow \emptyset$; $D \leftarrow 0^{n}$\;

   \ForEach{agent $z\in V$}{
     $\pointer[z] \leftarrow 2$\Comment*{start with the first agent~$\neq z$ from agent~$z$'s preferences}
       \CinsertMP($z$)\;  \label{insert:first-calls}  
   }
   \While{$Q\neq \emptyset$}{
    Pop the next possible matching event $\{x,y\}$ from~$Q$\;
    \If{$D[x]=0$ \AND $D[y]=0$\label{detect}}{
      \CsmartD($x$)\;
      
      \CsmartD($y$)\;

      $M\leftarrow M\cup \{\match{}{x}{y}\}$\;
    }
   }
   \Return $M$\;
   }

   \Fn{\insertMostPreferred{$z$}}{     
     \If{$\pointer[z] \le n$}{%
       $y' \leftarrow L_z[\pointer[z]-1]$\Comment*{initialize local variable storing last inserted agent}
       
       \Repeat{($\pointer[z]\ge n+1$) \OR ($B[z]>0$ \AND $R[z,y']<R[z,L_z[\pointer[z]]]$)}
       {
         \If{$D[L_z[\pointer[z]]]=0$\label{insert:unmatched}}{
           $y' \leftarrow L_z[\pointer[z]]$\Comment*{next most acceptable agent~$\neq z$}
            $A[z,y']\leftarrow 1$\; \label{insert:set-one} 
           \If{$A[y',z]=1$\label{insert:push-if}}{Push $\{z,y'\}$ to $Q$\Comment*{$z$ and $y'$ are currently each other's most acceptable agents}\label{insert:push-do}}
           $B[z]\leftarrow B[z]+1$\;
         }
         $\pointer[z] \leftarrow \pointer[z]+1$\Comment*{look for next most acceptable agent} \label{insert:next}
       }
     }
   }
   
   \Fn{\smartDelete{$x$}}{
   $D[x]\leftarrow 1$\;

   \lForEach{agent~$z\in V$ with $A[x,z]=1$}{
    $A[x,z]\leftarrow 0$\label{delete:set-zero1}
    }
   
   \ForEach{agent~$z\in V$ with $A[z,x]=1$}{
   $A[z,x]\leftarrow 0$\;\label{delete:set-zero2}
   $B[z]\leftarrow B[z]-1$\;
    \lIf{$B[z]=0$}{\CinsertMP($z$)}
  }
  }
 \caption{Computing a stable matching with input~$\Pot$ being complete, narcissistic, single-peaked,
 and possibly with ties in $O(n^2)$~time.}
 \label{alg:BText}
 \end{algorithm}

 Before we show the correctness of \cref{alg:BText}, we first claim the following.
  Each call of \CinsertMP($z$)
  \begin{compactenum}[(a)]
    \item\label{inv:z-not-deleted} assumes $D[z]=0$, 
    \item\label{inv:find-next} finds all unmatched agents~$y'$ which are tied with agent~$L_z[\pointer_z]$ and sets $A[z,y']=1$~(Lines~\eqref{insert:unmatched}--\eqref{insert:next}), and
    \item\label{inv:push-next} pushes a possible matching event~$\{z,y'\}$ to $Q$ if, additionally, $y'$ also finds $z$ most acceptable~(Lines~\eqref{insert:push-if}--\eqref{insert:push-do}).
  \end{compactenum}
  For the first calls of \CinsertMP($z$) in Line~\eqref{insert:first-calls}, the above invariant is straight-forward to verify:
  The matrix~$A$ and the vector~$D$ are initialized with all-zero entries, and $\pointer[z]$ points to a position in the preference list~$L_z$ of~$z$ such that
  \CinsertMP($z$) goes through the preference list~$L_z$ of~$z$, using pointer~$\pointer[z]$, 
  and finds all most acceptable agents of~$z$, updating the row~$A[z]$,
  until $\pointer[z]$ points to a position which contains an agent that is strictly less preferred.

  For each following execution of \CinsertMP($z$), observe that the procedure is only called when~$D[z]=0$ and when $B[z]=0$.
  Since $B[z]$ is increased whenever some entry~$A[z,x]$ is set from zero to one
  and decreased whenever some entry~$A[z,x]$ is set from one to zero,
  this implies that $A[z,x]=0$ holds for each agent~$x\in V$ when the procedure is called.
  Moreover, at the beginning of the execution of \CinsertMP($z$) (except the first call),
  integer~$\pointer_z$ points to the position of the first agent in the list~$L_z$ that is strictly less preferred than all agents that were previously inserted into~$A$. 
  Now, whenever called, \CinsertMP($z$) uses $\pointer[z]$ to continue iterating through the preference list~$L_z$.
  Only unmatched agents~$L_z[\pointer[z]]$ (see Line~\eqref{insert:unmatched}) that are most-preferred by~$z$ are inserted into the matrix~$A$ by setting $A[z,y']=1$ until
  either the end of the the preference list~$L_z$ is reached~(i.e., $\pointer[z]=n+1$),
  or at least one agent was inserted (i.e., $B[z]>0$) such that $\pointer_z$
  points to the position of an agent that is strictly less preferred to the just inserted agent (stored as~$y'$).

  Now, to show the correctness of \cref{alg:BText}, we need to show 
  \begin{compactenum}[(i)]
    \item\label{correctness:1} whenever \cref{alg:BText} adds a pair~$\{x,y\}$ to~$M$, then $x$~and~$y$ indeed are each other's most acceptable agents (under the profile where all pairs previously added to $M$ are deleted),
    and 
    \item\label{correctness:2} there is always a possible pair in the queue~$Q$ where both agents are still unmatched. 
  \end{compactenum}  

  To verify~\eqref{correctness:1}, clearly, the first pair that is added to $M$ satisfies~\eqref{correctness:1}.
  Assume that $\{x_i,y_i\}$ is the $i^{\text{th}}$ pair added to $M$ such that all $i-1$ previous pairs~$q_1,\ldots, q_{i-1}$ added to $M$ satisfy \eqref{correctness:1}.
  Suppose, for the sake of contradiction, that $y_i$ is \emph{not} one of the most acceptable agents of $x_i$, after all agents in the pairs~$q_1,\ldots,q_{i-1}$ are deleted from the profile.
  Let $y$ be a not-yet-matched agent that is strictly preferred to~$y_i$ by $x_i$.
  Now, consider the call~\CinsertMP{} when $\{x_i,y_i\}$ is pushed to $Q$.  
  By \eqref{inv:find-next} and since $\pointer_z$ is none-decreasing,
  it follows that $\{x_i,y\}$ is added to~$Q$ in a previous round.
  This means that $\{x_i,y\}$ must be considered before $\{x_i,y_i\}$ in the queue~$Q$ and will thus be added to~$M$, a contradiction.
  
  As for \eqref{correctness:2}, observe that whenever the algorithm pops a pair from~$Q$,
  the queue~$Q$ contains every pair of agents that currently consider each other most acceptable (see \eqref{inv:find-next} and \eqref{inv:push-next}).
  The queue~$Q$ may contain further pairs of agents where one or both agents are already deleted; these pairs, however, will be detected (Line~\eqref{detect}) and ignored by the algorithm.
  Thus, the existence follows analogously to the one for \cref{alg:BT}.
  Finally, we analyze the running time.
  First of all, every agent~$x$ is deleted (by calling \CsmartD($x$)) at most once.
  Each call of \CsmartD($x$) takes $O(n)$~steps 
  (some constant number of operations, adjusting one row and one column in~$A$, each of size~$n$), 
  plus some calls of \CinsertMP{}.
  Second, \CinsertMP{} takes amortized~$O(n)$ running time for each agent~$z$ since it iterates through the preference list of~$z$ only once.
  Third, after the initialization, every entry~$A[x,y]$ is set from zero to one at most once~(see Line~\eqref{insert:set-one}), which happens only when $A[x,y]$ was zero.
  It is reset to zero at most once, which happens only when \CsmartD($x$) or \CsmartD($y$) is called and
  when $A[x,y]$ was one~(see Lines~\eqref{delete:set-zero1} and \eqref{delete:set-zero2}).
  Since $\{x,y\}$ is only pushed to $Q$ in \CinsertMP($x$) or \CinsertMP($y$)
  when  $A[x,y]=A[y,x]=1$ and $D[x]=D[y]=0$ (see \eqref{inv:z-not-deleted} in the above reasoning and Lines~\eqref{insert:unmatched}, \eqref{insert:set-one}, and \eqref{insert:push-if}),
  it follows that~$\{x,y\}$ can be pushed to the queues~$Q$ at most once.
  Since there are $O(n^2)$ pairs that can be added to $Q$ and since \CsmartD{} is only called whenever a pair is added to $M$, \CcomputeSM{} indeed takes $O(n^2)$~time.
\end{proof}

Now, we move on to (tie-sensitive) single-crossingness.
\begin{proposition}\label[proposition]{complete-sc:p}  
  If a preference profile with $2n$~agents is complete, with ties, narcissistic and single-crossing,
  then it always admits a stable matching,
  which can be found 
  in $O(n^2)$ time.
\end{proposition}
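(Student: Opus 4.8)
The plan is to reduce both the single-crossing and the \tiescness cases to the single-peaked case already treated in \cref{thm:complete-ties-nsp-qudratic}, so that existence of a stable matching and the $O(n^2)$ bound follow with essentially no extra work. The pivotal ingredient is \cref{thm:complete:nsc->sp}: a complete, narcissistic profile that is single-crossing with respect to an order~$\rhd$ is also single-peaked with respect to the \emph{same} order~$\rhd$. Hence a complete, narcissistic, single-crossing profile (even with ties) is in particular complete, narcissistic, and single-peaked, and \cref{thm:complete-ties-nsp-qudratic} then yields both a stable matching and the fact that \cref{alg:BT} computes one in $O(n^2)$ time.

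For the \tiescness variant I would first invoke the second part of \cref{obs:relation-sc-tsc}, which says that \tiescness implies single-crossingness for preferences with ties. A complete, narcissistic, \tiesc profile is therefore single-crossing, and the previous paragraph applies verbatim. Thus neither sub-case requires a fresh combinatorial argument; the whole existence-plus-$O(n^2)$ statement is carried by \cref{obs:relation-sc-tsc,thm:complete:nsc->sp,thm:complete-ties-nsp-qudratic}.

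The only part needing a genuinely new argument is the sharper $O(n)$ bound for tie-free preferences, which recovers the original analysis of \citet{BartholdiTrick1986}. The point is that, without ties, each agent's most acceptable other agent is \emph{unique}: since the profile is narcissistic and single-peaked with respect to~$\rhd$, an agent's preferences strictly decrease as one moves away from its own position in~$\rhd$, so its best partner is one of its two neighbors in~$\rhd$, and exactly one of the two is strictly preferred. Consequently every mutually-most-acceptable pair consists of two consecutive agents in~$\rhd$, and when \cref{alg:BT} deletes such a pair, only the favorite-neighbor relation of the (at most two) agents that become newly adjacent can change, each update costing $O(1)$. This is precisely where the ties case is more expensive: there a single agent can be most acceptable to many others, so maintaining the most-acceptable sets after a deletion costs $\Theta(n)$ per round and $\Theta(n^2)$ overall.

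I expect the main obstacle to be making the $O(n)$ claim rigorous rather than the existence claim, which collapses cleanly through \cref{thm:complete:nsc->sp}. Concretely, one must show that locating a consecutive mutually-most-acceptable pair and repairing the favorite-neighbor pointers after each deletion can be amortized to $O(1)$ per agent. I would handle this by keeping a doubly linked list of the agents in the order~$\rhd$ together with each surviving agent's favorite neighbor, charging every pointer repair to the pair just removed, so that the entire sweep over the $n$ rounds runs in $O(n)$ total time.
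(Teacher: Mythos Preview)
Your proposal is correct and follows essentially the same route as the paper: reduce single-crossing (and, via \cref{obs:relation-sc-tsc}, \tiesc) to single-peaked through \cref{thm:complete:nsc->sp}, then invoke \cref{thm:complete-ties-nsp-qudratic} for existence and the $O(n^2)$ bound. The only difference is that for the tie-free $O(n)$ bound the paper simply cites \citet{BartholdiTrick1986}, whereas you re-derive their linked-list argument; your extra detail is sound but not required.
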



\begin{proof}
  By \cref{thm:complete:nsc->sp} and \cref{prop:relation-sc-tsc}\eqref{statement:noties:tiesc=sc}, the stated profiles are single-peaked. 
  By \cref{thm:complete-ties-nsp-qudratic}, we obtain the desired statement. 
\end{proof}

\section{Incomplete preferences}\label{sec:incomplete}
In this section, we consider the case when the input may contain incomplete preferences, meaning that the underlying acceptability graph may not be a complete graph. 
One reason for the occurrence of incomplete preferences could be that two agents may consider each other unacceptable and do not want to be matched together, or they are not ``allowed'' to be matched to each other.
If in this case no two agents are considered of equal value by any agent 
(i.e., the preferences do not have ties),
then \SR still remains polynomial-time solvable~\cite{GusfieldIrving1989}.
However, once ties are involved, \SR becomes NP-complete~\cite{Ronn1990} even for complete preferences.

First of all, we observe that once ties are allowed, 
neither single-peakedness nor single-crossingness, combined with narcissism, can guarantee that there are always two agents that are each other's most acceptable agent~(see \cref{prop:incomplete}\eqref{statement:inc:nspsc-no-existence}).
However, having two such agents is crucial for the existence of a stable matching so that it can be found by the algorithm by \citet{BartholdiTrick1986}~(see \cref{alg:BT}). 
Moreover, for incomplete preferences, even without ties, narcissistic and single-crossing preferences do not imply single-peakedness anymore.

\begin{proposition}\label[proposition]{prop:incomplete}
  For incomplete preferences without ties, the following holds.
  \begin{enumerate}[(1)] 
    \item\label{statement:inc:nsc-neq-nsp} Narcissistic and single-crossing preferences are \emph{not} necessarily single-peaked.
    \item\label{statement:inc:nspsc-no-unique} 
    Narcissistic and single-peaked (resp.\ single-crossing) preferences do not guarantee the uniqueness of stable matchings.
    \item\label{statement:inc:nspsc-no-existence} 
    Narcissistic and single-peaked (resp.\ single-crossing) preferences do not guarantee the existence of stable matchings.
  \end{enumerate}
\end{proposition}

\begin{proof}
  Statement~\eqref{statement:inc:nsc-neq-nsp}: Consider the following profile with six agents~$1,2,\ldots, 6$:
\begin{alignat*}{4}
   \text{Profile}~\Pot_1\colon 
  &\agent~1\colon &&1 \succ_1 3 \succ_1 4 \succ_1 2 \succ_1 5 \succ_1 6, \\
&    \text{agent }2\colon &&2 \succ_2 4 \succ_2 3 \succ_2 6 \succ_2 5 \succ_2 1, \\
&  \text{agent }3\colon &&3 \succ_3 2 \succ_3 1 \succ_3 6 \succ_3 5, \\
&  \text{agent }4\colon &&4 \succ_4 1 \succ_4 2, \\
  &  \text{agent }5\colon &&5  \succ_5 2 \succ_5 3 \succ_5 1,\\
&   \text{agent }6\colon &&6  \succ_6 2 \succ_6 3 \succ_6 1.
 \end{alignat*}
\noindent One can check that the profile is narcissistic, and it is single-crossing with respect to the order~$5 \rhd 6 \rhd 2 \rhd 3 \rhd 4 \rhd 1$.
But it is not single-peaked with respect to~$\rhd$ because agent~$1$'s preference order on~$\{1,3,4\}$ is not single-peaked with respect to~$\rhd$.
In fact, the profile is not single-peaked at all since the preference orders of agents~$1$,~$2$, and~$3$ restricted to the alternatives~$\{1,5,6\}$ form a forbidden subprofile (the so-called \emph{worst-configuration}) for the single-peaked property~\cite{BaHa2011}:
In a single-peaked order, regarding the preference orders of agents~$1$ and $2$, agent~$5$ \emph{must} be ordered between~$1$ and $6$, which is not possible because of the preference order of agent~$3$.  
The profile does not admit a perfect stable matching, i.e., a stable matching of size \emph{three}.
But it admits two stable matchings of size two each: $\{\{1,3\},\{2,4\}\}$ and $\{\{1,4\}, \{2,3\}\}$.

Statement~\eqref{statement:inc:nspsc-no-unique}: Consider the following profile with four agents~$1,2,3,4$.
\begin{alignat*}{4}
  \text{Profile~}\Pot_2\colon\quad &  \agent~1\colon &&1 \succ 2 \succ 3 \succ 4,\quad &~~&
    \agent~2\colon &&2 \succ 4 \succ 1, \\
  &  \agent~3\colon &&3 \succ 1 \succ 4, &&
    \agent~4\colon &&4 \succ 3 \succ 2 \succ 1.
 \end{alignat*}
Once can check that the profile is narcissistic and single-peaked with respect to the order~$1\rhd 2 \rhd 3 \rhd 4$,
and single-crossing with respect to the order $1\rhd' 3 \rhd' 2 \rhd' 4$.
It admits two different stable matchings $\{\match{}{1}{2}, \match{}{3}{4}\}$ and $\{\match{}{1}{3}, \match{}{2}{4}\}$.

Statement~\eqref{statement:inc:nspsc-no-existence}: Consider the following profile with six agents~$1,2,\ldots, 6$.
 \begin{alignat*}{4} 
   \text{Profile~}\Pot_3\colon\quad&  \agent~1\colon&& 1 \succ {5} \succ 2,\quad&  ~~& \agent~2\colon&& 2 \succ {1}   \succ 3,\\
 &  \agent~{3} \colon&& {3}\succ 2  \succ {4}, &\quad & \agent~{4} \colon&& 4 \succ {3} \succ 5,\\
  &  \agent~{5} \colon&& {5}\succ {4} \succ  1 \succ {6}, \quad && \agent~{6} \colon&& {6}\succ 5.
 \end{alignat*}
It is narcissistic, and single-peaked and single-crossing with respect to the order
$3 \rhd 2 \rhd 1 \rhd 4 \rhd 5 \rhd 6$. 
One can check that the profile is single-peaked with respect to the order~$\rhd$. 
For single-crossingness, observe that each pair of agents is ranked by at most two different agents so that the profile
is (tie-sensitive) single-crossing with respect to any ordering.

However, no matching~$M$ is stable for this profile.
To see this, notice that the preferences of agents from $A^*\coloneqq \{1,2,3,4,5\}$ form a certain cyclic structure:
For each agent~$i\in A^*$ it holds that agent~$i$ is the most preferred agent of agent~$(i\bmod 5)+1$.
Now, consider an arbitrary matching~$M$.
Since $|A^*|$ is odd, there is at least one agent~$i\in A^*$ with $M(i)\notin A^*$.
It is straightforward to see that agent~$i$ and agent~$(i\bmod 5)+1$ will form a blocking pair for~$M$.
\end{proof}

For the case when ties in the preferences are allowed, \citet{Ronn1990} showed that \SR becomes NP-hard even if the preferences are complete.
The constructed instances in his hardness proof, however, are not always single-peaked or single-crossing.
It is even not clear whether the problem remains NP-hard for this restricted case.
If the preferences need not be complete, however, 
then we can show NP-hardness,
by a completely different reduction, obtaining our main intractability result.

Before we state the corresponding theorem, we prove the following two lemmas which are used heavily
in our preference profile construction
to force two agents of specific types to be matched together.
The first lemma summarizes an observation on a profile that is similar 
to profile~$\Pot_3$ presented in the proof of \cref{prop:incomplete}.


\begin{lemma} \label[lemma]{lem:selector-gadget}
  Let $A=\{a^1,a^2,a^3, a^4, a^{5}\}$ be a set of five distinct agents
  and let $X$ be a non-empty set of agents disjoint from~$A$. 
  The preference orders of the agents in $A$ satisfy the following, where $(X)$ means that the agents in set~$X$ are tied with each other:
  \begin{alignat*}{7}
   &\agent~a^1\colon&& a^1 \succ a^{5}\succ a^2,&~&  \agent~a^{2} \colon&& a^2 \succ a^{1}\succ a^3, \\
   &\agent~a^3\colon&& a^3 \succ a^{2}   \succ a^4,&& \agent~a^{4} \colon&& a^{4}\succ a^{3}\succ  a^{5}, &~ &
   &\agent~a^{5} \colon&& a^{5}\succ (X) \succ a^{4} \succ a^1.
  \end{alignat*}
  Then, the following holds.
  \begin{enumerate}[(1)]
    \item\label{lem:e-selector}  Every stable matching~$M$ for~$A$ satisfies $M(a^{5})\in X$, $M(a^1)=a^2$, and $M(a^3)=a^4$.
    \item\label{lem:A-nsp} The preferences from~$A$ are narcissistic and single-peaked with respect to the linear order~$a^{3}\rhd a^{2} \rhd a^1 \rhd a^4 \rhd a^{5} \rhd [X]$,
    where $[X]$ denotes some fixed linear order of the agents in $X$.
    \item\label{lem:A-tiesc} The preferences from~$A$ are tie-sensitive single-crossing with respect to any ordering of the agents in $A$.
  \end{enumerate}
\end{lemma}
\begin{proof}
  Towards a contradiction to Statement~\eqref{lem:e-selector}, suppose that there is a stable matching~$M$ with $M(a^5)\notin X$. 
  Then, by the preferences of $a^5$, there remain three possibilities~(i)--(iii) for the partner of $a^5$.
  We claim to obtain a blocking pair for $M$ for each of these possibilities.
  
  \begin{compactenum}
    \item[\textbf{Case (i):}] $M(a^{5})\notin \{a^1, a^{4}\}$. This implies that agent~$a^{5}$ does not have a partner. Then, by construction, $\{a^{5},a^1\}$ will be blocking~$M$.
 
    \item[\textbf{Case (ii):}] $M(a^{5}) = a^{4}$. Since $a^4$ prefers $a^3$ to $a^5$, 
    it follows that $a^3$ must obtain a partner that it prefers to $a^4$ as, otherwise, $\{a^3,a^4\}$ will form a blocking pair for $M$.
    Since $a^2$ is the only (acceptable) agent that $a^3$ prefers to $a^4$, it follows that $\{a^{3}, a^{2}\}\in M$.
    Consequently, $a^1$ will not be assigned a partner by $M$. 
    However, $\{a^1, a^2\}$ will form a blocking pair for~$M$. 

  \item[\textbf{Case (iii):}] $M(a^{5}) = a^{1}$.
  Analogously to Case~(ii), if $M(a^{5})=a^1$, then we deduce that $M(a^{4})=a^3$.
  This implies that $a^2$ remains unmatched by $M$.
  However, $\{a^2,a^3\}$ will form a blocking pair for $M$.
 \end{compactenum}
 We have just shown that $M(a^5)\in X$. By the preferences of $a^2$ it follows that $M(a^2)=a^1$ as otherwise $a^1$ is unmatched and will form a blocking pair with~$a^2$. Similarly, we can infer that $M(a^4)=a^3$.

 Statement~\eqref{lem:A-nsp} is straight-forward.
 As for Statement~\eqref{lem:A-tiesc}, which concerns \tiescness{},
 we observe that it holds with respect to every order of the agents:
 Each pair of agents is ranked by at most two different agents and ties only occur in the preferences of~$a^5$.
\end{proof}

Whereas \cref{lem:selector-gadget} enforces that a specific agent~(i.e., $a^5$) must be matched with some agent
from a specific group of agents~(i.e., $X$),
the forthcoming lemma enforces some specific combination of matchings inside this specific group~$X$.
This seems crucial to ensure the narcissistic, single-peaked, and \tiesc property.

\begin{lemma}\label[lemma]{lem:vertex-gadget}
  Let $X\coloneqq \{x^1, x^2, \ldots, x^{10}\}$, $R\coloneqq \{\wred{r^2},\wred{r^4},\wred{r^6}, \wred{r^8}\}$ be two disjoint sets of agents, 
  and let $A$ and $B$ be two disjoint sets of agents such that $X, R, A$, and $B$ are pairwise disjoint.
  Furthermore, assume that the preferences of the agents from~$X$ satisfy the following, where the symbols~$[A]$ and $[B]$ denote some fixed linear orders of the agents in $A$ and $B$, respectively:
  \begin{alignat*}{4}
    \agent~x^1\colon&~ x^1 \succ x^{10} \succ [A] \succ x^2,&\quad&  \agent~x^{2} \colon&~& x^2 \succ x^{1}\succ \wred{r^2}\succ x^3. \\
   \agent~x^3\colon&~ x^3 \succ x^{2}   \succ x^4,&\quad& \agent~x^{4} \colon&~& x^{4}\succ x^{3}\succ \wred{r^4} \succ  x^{5}.\\  
   \agent~x^{5} \colon&~ x^{5}\succ x^4  \succ x^{6}. &&\agent~x^{6} \colon&~& x^{6}\succ x^5 \succ \wred{r^6} \succ  x^{7}.\\
   \agent~x^{7} \colon&~ x^{7}\succ x^6 \succ x^{8}.   &&\agent~x^{8} \colon&~& x^{8}\succ x^7 \succ \wred{r^8} \succ x^{9}. \\
   \agent~x^{9} \colon&~ x^{9}\succ x^8 \succ x^{10}.   &&\agent~x^{10} \colon&~& x^{10}\succ x^9 \succ [B] \succ x^{1}. 
  \end{alignat*}
  The following holds for the preferences of $X$.
  \begin{enumerate}[(1)]
    \item\label{lem:e-vertex}   
    If $M$ is a stable matching with $M(x^{10})\in B$, then for each~$i\in \{4,3,2,1\}$ we have that $\{x^{2i+1},x^{2i}\}\in M$.
    
    \item\label{lem:nspsc} The above preferences are narcissistic, single-peaked, and \tiesc with respect to the following order,
  where $\overleftarrow{[A]}$ and $\overleftarrow{[B]}$ denote the reverse of the fixed orders~$[A]$ and $[B]$, respectively:
  \[\overleftarrow{[A]}\succ \overleftarrow{[B]} \succ x^{10} \rhd x^9 \rhd x^1 \rhd x^2 \rhd \wred{r^2} \rhd x^3 \rhd x^4 \rhd \wred{r^4} \rhd x^5 \rhd x^6  \rhd \wred{r^6} \rhd x^7 \rhd x^8 \rhd \wred{r^8}.\]
  \end{enumerate}
\end{lemma}

\begin{proof}
  By our assumption that $M(x^{10})\in B$ and by the preference order of $x^{10}$, it follows that $M(x^9)=x^8$ as otherwise $\{x^9,x^{10}\}$ will form a blocking pair for~$M$.
  By an analogous reasoning for $x^7$, $x^5$, and $x^3$, we deduce that $M(x^7)=x^6$, $M(x^5)=x^4$, and $M(x^3)=x^2$, showing Statement~\eqref{lem:e-vertex}.

  As for Statement~\eqref{lem:nspsc}, clearly, the preferences of the agents in $X$, are narcissistic. 
  One can check that the preferences are indeed single-peaked with respect to the order~$\rhd$.
  Since no two agents are ranked by more than two different agents, it is clear that the preferences are \tiesc,
  implying single-crossingness due to \cref{prop:relation-sc-tsc}\eqref{statement:noties:tiesc=sc}. 
\end{proof}

\cref{fig:gadget} illustrates the crucial part of the acceptability graphs of the profiles discussed in \cref{lem:selector-gadget} and  \cref{lem:vertex-gadget}.
Observe that, these graphs display a certain cyclic structure if we make the 
edges directed: 
If we construct an arc~$(u,w)$ for each agent~$u$ and its most acceptable agent~$w$,
then we obtain a directed cycle.
Moreover, we have already seen from profile~$P_3$ in the proof of \cref{prop:incomplete} that odd cycles imply non-existence of stable matchings.
Using this observation, we can show our main result now.

\begin{figure}
  \centering
     \def \nodesize {20pt}

      \tikzstyle{pnode} = [fill=black!10]
      \tikzstyle{cnode} = []
      \tikzstyle{match} = [line width = 2.5pt, red]
      \tikzstyle{secondmatch} = [line width = 1.2pt, double, double distance=2pt]
      \begin{tikzpicture}[every node/.style={draw=black,thick,circle,inner sep=1pt, font=\footnotesize}]
      \def \n {5}    
      \def \radius {.9cm}
      \def \margin {12}
      
      \foreach \s in {1,2,3,4,5} {
        \node[draw, minimum size=\nodesize] at ({360/\n * \s}:\radius) (a\s) {$a^\s$};
      }
      
      \node[draw,pnode,minimum size=\nodesize, right = .5cm of a5] (X) {$X$};

      \foreach \i /\j in {1/2,2/3,3/4,4/5,5/1} {
        \draw (a\i) edge[normalline] (a\j);
      }
      \foreach \i /\j in {a1/a2,a3/a4,a5/X} {
        \draw (\i) edge[match] (\j);
      }

    \end{tikzpicture}
    ~\qquad~
    \begin{tikzpicture}[every node/.style={draw=black,thick,circle,inner sep=1pt, font=\footnotesize}]
      \def \n {5}
      \def \yzero {2.4}
      \def \yone {1.2}
      \def \ythree {0}
      \def \yfour {-1.2}
      \def \xone {0}
      \def \xtwo {6.3}

      \foreach \s / \i / \j  in {1/10/\ythree,1/1/\yone,2/2/\yone,3/3/\yone,4/4/\yone,5/5/\yone,%
        5/6/\ythree, 4/7/\ythree,3/8/\ythree,2/9/\ythree} {
      \node[draw, pnode][minimum size=\nodesize] at ({\xtwo/\n * (\s)}, \j) (x\i) {$x^{\i}$};
    }

    \foreach \s / \x / \y / \name / \id in {0/\xtwo/\ythree/{B}/AS, 0/\xtwo/\yone/{A}/A, 2/\xtwo/\yzero/{r^2}/r2, 4/\xtwo/\yzero/{r^4}/r4, 5/\xtwo/\yfour/{r^6}/r6, 3/\xtwo/\yfour/{r^8}/r8} {
    \node[draw, cnode][minimum size=\nodesize] at ({\x/\n * \s}, \y) (\id) {$\name$};
  }

    \foreach \i / \j in {x1/x2,x2/x3,x3/x4,x4/x5,x5/x6,x6/x7,x7/x8,x8/x9,x9/x10,x10/x1,AS/x10,A/x1,x2/r2,x4/r4, r6/x6, r8/x8} {
      \draw (\i) edge[normalline] (\j);
    }
    \foreach \i / \j in {x2/x3,x4/x5,x6/x7,x8/x9,AS/x10,A/x1} {
      \draw (\i) edge[match] (\j);
    } 
    \foreach \i / \j in {x2/x1,x4/x3,x6/x5,x8/x7,x10/x9} {
      \draw (\i) edge[secondmatch] (\j);
    }
    \end{tikzpicture}

    \caption{Illustration of the acceptability graphs for \cref{lem:selector-gadget,lem:vertex-gadget}. 
      Left: The node labeled with $X$ represents the agents in $X$. Thick red lines correspond to a possible stable matching. 
      Right: The nodes labeled with $A$ and $B$ represent the agents in $A$ and $B$, respectively. 
      There are two possible stable matchings, one represented by thick red lines and the other by double lines.}\label{fig:gadget}
\end{figure}
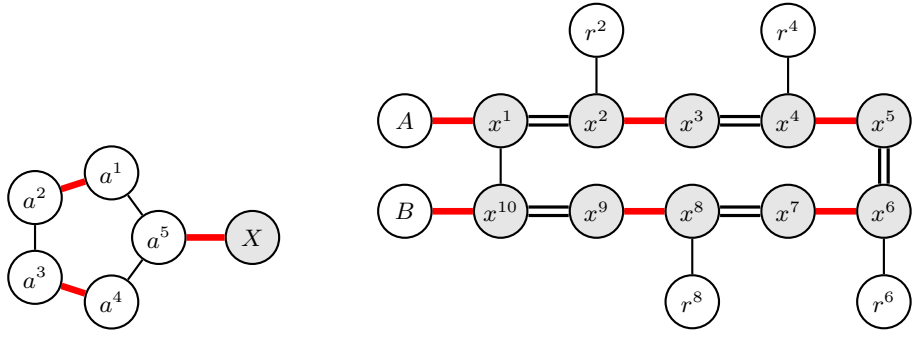

\begin{theorem}\label[theorem]{incomplete-ties-sp-sc:np-c}
  \SR for incomplete preferences with ties remains NP-complete, even if 
  the preferences are narcissistic, single-peaked, and (tie-sensitive) single-crossing.
\end{theorem}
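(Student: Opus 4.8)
The plan is first to settle membership in $\NP$ and then to establish $\NP$-hardness by a reduction from a suitable $\NP$-hard problem, using \cref{lem:selector-gadget,lem:vertex-gadget} as the two building blocks. Membership is immediate: a stable matching has size $O(n)$, and given a candidate matching one checks in time $O(n^2)$ whether any of the at most $\binom{2n}{2}$ unmatched acceptable pairs is blocking. For the hardness I would reduce from a suitable constraint-satisfaction problem (for concreteness, a variant of \textsc{3-SAT}), turning each variable into a copy of the vertex gadget of \cref{lem:vertex-gadget} and each clause into a copy of the selector gadget of \cref{lem:selector-gadget}, and coupling the copies through the shared port agents $A$, $B$, $r^2,r^4,r^6,r^8$, and $X$.

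The two stable states of the vertex gadget (the two matchings depicted in the accompanying figure) encode the truth value of a variable: by the first part of \cref{lem:vertex-gadget}, forcing $x^{10}$ into $B$ propagates one fixed internal matching all the way down the even cycle, so each copy behaves as a binary switch whose chosen state is transmitted, via the port agents $r^{2i}$, to the clause side by controlling which members of the corresponding $X$ are still available. On the clause side, \cref{lem:selector-gadget} guarantees that in every stable matching $a^5$ is matched to some agent of $X$; I would identify the agents of $X$ with the literals of the clause so that a literal agent is available to $a^5$ exactly when the adjacent variable gadget is set to the state satisfying that literal. Consequently, if a clause is left unsatisfied, all of its $X$-agents are occupied, $a^5$ has no admissible partner in $X$, and by the lemma no stable matching can exist---this is exactly the odd-cycle obstruction to stability already exhibited by profile~$P_3$ in the proof of \cref{prop:incomplete}. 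Conversely, from a satisfying assignment one sets each variable gadget to the corresponding state, matches each $a^5$ to a satisfied literal, and verifies stability. Thus the constructed profile admits a stable matching if and only if the source instance is a yes-instance.

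The three structural properties must be maintained throughout the gluing, and this is where the real work lies. Narcissism is immediate from the construction, since every agent ranks itself on top. Tie-sensitive single-crossingness---hence single-crossingness, by \cref{obs:relation-sc-tsc}---comes essentially for free: the gadgets are designed so that every pair of agents is ranked by at most two voters and ties occur only in one designated agent per gadget, and, as noted in both lemmas, such a profile is \tiesc with respect to \emph{any} linear order. The crucial point is therefore to preserve this sparsity globally, i.e.\ to ensure that each shared port agent enters the preferences of only a bounded number of gadgets. The genuine obstacle is single-peakedness: I would have to stitch the local single-peaked orders supplied by \cref{lem:selector-gadget,lem:vertex-gadget} (with the reversed port blocks $\overleftarrow{[A]}$ and $\overleftarrow{[B]}$ placed in front) into one global order $\rhd$ on all agents that is single-peaked for every gadget \emph{and} for the inter-gadget connections, and then argue that no \textbf{new} blocking pair arises across gadget boundaries. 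Laying out the variable and clause gadgets along $\rhd$ so that every shared port agent sits consistently between the gadgets it joins---without destroying single-peakedness of any single agent's preference---is the crux, and the requirement that such a global order exist is precisely what constrains both the admissible source instances and the fine internal design of the gadgets.
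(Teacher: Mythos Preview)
Your proposal is a plan, not a proof, and it stops precisely where the real content begins. You yourself name single-peakedness as ``the crux'' and ``the genuine obstacle,'' say you ``would have to stitch the local single-peaked orders \ldots\ into one global order,'' and then do not do so. This is exactly the step the paper carries out in full. The paper does \emph{not} reduce from \textsc{3-SAT} but from \textsc{Independent Set} restricted to graphs of maximum degree three. The degree bound is essential: together with Vizing's theorem it yields a 4-edge-colouring, which is what allows every vertex gadget to use its four fixed ports $r^2,r^4,r^6,r^8$---one per colour class---to encode all incident edges, and what makes an explicit global single-peaked order writable (the paper groups agents into colour-indexed blocks $[U^{2c-1,2c}]$ and lays these blocks out consecutively). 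A plain \textsc{3-SAT} reduction lets a variable occur in arbitrarily many clauses, so a fixed-size gadget with four ports cannot serve it; you would at minimum need a bounded-occurrence variant, and you would still owe the global order.

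There is also a mechanical mismatch in how you describe information flow. You say the variable state is ``transmitted via the port agents $r^{2i}$'' by ``controlling which members of the corresponding $X$ are still available,'' as if $r^{2i}$ becomes matched or unmatched depending on the state. In the gadget of \cref{lem:vertex-gadget}, the $r^{2i}$ are never matched internally in \emph{either} state; what changes is whether $x^{2i}$ strictly prefers $r^{2i}$ to its current partner (yes when $M(x^{10})\in B$, no otherwise). The paper exploits this by setting $r_i^{2c}\coloneqq u_{i'}^{2c}$ for the neighbouring vertex~$u_{i'}$ in colour class~$c$: selecting two adjacent vertices forces both into the $B$-state and makes $\{u_i^{2c},u_{i'}^{2c}\}$ a blocking pair. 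Turning this mutual-dissatisfaction mechanism into a ``literal is available to $a^5$'' mechanism for clause selectors is not automatic, and your sketch does not explain how. Finally, your claim that \tiescness{} ``comes essentially for free'' because every pair is ranked by at most two voters fails once gadgets are glued: in the paper's construction pairs such as $\{u_i^1,u_{i'}^1\}$ are ranked by all $k$ selector agents $a_j^5$, and a separate case analysis is needed (it succeeds because those rankings are all ties, but it is not for free).
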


\begin{proof}
  First, the problem is in NP since one can non-de\-ter\-min\-isti\-cally guess a matching and check stability in polynomial time.
  To show NP-hardness, we reduce from the NP-complete \textsc{Independent Set} problem~\cite{GJ79}:

  \decprob{Independent Set}{
    an undirected graph~$G=(U,E)$ and a non-negative integer~$k$.}
  {Is there a size-$k$ \emph{independent set}, i.e., 
  a subset~$U'\subseteq U$ of $k$~pairwise non-adjacent vertices?}
  We assume that each vertex has degree at most three since \textsc{Independent Set} remains NP-hard for this case~\cite{GJS76}.
  Let $(G=(U,E),k)$ be an \textsc{Independent Set} instance with $U$ being the vertex set and $E$ being the edge set.
  Let $U\coloneqq \{u_1, u_2,\dots, u_n\}$.
  We will construct a \SR instance~$\Pot$ with agent set~$V$ which is narcissistic, single-peaked, and \tiesc,
  and show that~$G$ admits a size-$k$ independent set if and only if~$\Pot$ admits a stable matching.
  
  \paragraph{Main idea and the constructed agents.}
 
  For each vertex~$u_i \in U$, we introduce ten \emph{vertex agents}~$u^j_i$, $1\le j \le 10$, denoted as $U_i=\{u_i^j\mid 1\le j \le 10\}$.
  The idea is to obtain an acceptability graph that has a cycle of length ten (see our discussion just prior to the theorem) for each vertex of the input graph, including eight for the neighbors of this vertex.
  This is used to ensure narcissistic, single-peaked, and single-crossing property simultaneously. 
  Additionally, we introduce two groups of \emph{selector agents}, each with $k$ sets:
  \begin{compactenum}
    \item[\textbf{Group (1):}] $A_j=\{a^i_j\mid 1\le i \le 5\}$, $1\le j \le k$, and
    \item[\textbf{Group (2):}] $B_j=\{b^i_j\mid 1\le i \le 5\}$, $1\le j \le k$.
  \end{compactenum}
  We will construct preferences for these selector agents to enforce that 
  each two selectors $a_j^5$ and $b^5_j$, $1\le j \le k$, are matched to two vertex agents.
  Together with the preferences of the vertex agents, we make sure that the vertex agents that are matched to $a_j^5$ and $b^5_j$, respectively, correspond to the same vertex.

  The agent set~$V$ is defined as $V=\left(\bigcup_{1\le i \le n}U_i\right) \cup \left(\bigcup_{1\le j\le k}(A_j \cup B_j)\right)$.
  In total, we have constructed $10n+10k$~agents.

  To encode an \textsc{Independent Set} instance, we aim to construct preferences for the vertex agents such that \emph{no} two vertex agents that are matched to some selector agents correspond to two adjacent vertices.
  This property will be formally captured later by \cref{claim:stable-matching-properties}.

\paragraph{Selector agents.} The preferences of the selector agents are of the form as described in \cref{lem:selector-gadget}.
To this end, let $U^1=\{u^1_i\mid 1\le i \le n\}$ and $U^{10}=\{u^{10}_i\mid 1\le i \le n\}$.
We use $(U^1)$ and $(U^{10})$ to express that the agents in the respective subsets are tied.
\allowdisplaybreaks
\begin{alignat*}{4}
   \intertext{ $\forall j \in \{1,2,\dots, k\}\colon$ } & \text{selector agent}~a_{j}^1\colon &&  a_{j}^1 \succ a_{j}^5 \succ a_{j}^2,  &\quad &  \text{selector agent}~b_{j}^1\colon &&  b_{j}^1 \succ b_{j}^5 \succ b_{j}^2,\\
    & \text{selector agent}~a_{j}^2\colon &&  a_{j}^2 \succ a_{j}^1 \succ a_{j}^3,    &\quad & \text{selector agent}~b_{j}^2\colon &&  b_{j}^2 \succ b_{j}^1 \succ b_{j}^3,  \\
    & \text{selector agent}~a_{j}^3\colon && a_{j}^3 \succ a_{j}^2 \succ a_j^4,  &\quad   & \text{selector agent}~b_{j}^3\colon && b_{j}^3 \succ b_{j}^2 \succ b_j^4, \\
    & \text{selector agent}~a_{j}^4\colon &&  a_{j}^4 \succ a_{j}^3 \succ a_j^5,     & \quad & \text{selector agent}~b_{j}^4\colon &&  b_{j}^4 \succ b_{j}^3 \succ b_j^5, \\
    & \text{selector agent}~a_{j}^5\colon &&  a_{j}^5 \succ (U^1) \succ a_{j}^4 \succ a_j^1.& \quad & \text{selector agent}~b_{j}^5\colon &&  b_{j}^5 \succ (U^{10}) \succ b_{j}^4 \succ b_j^1. 
\end{alignat*}

\renewcommand{\Agent}{\text{\normalfont vertex agent}}
  \paragraph{Vertex agents.} The preferences of the vertex agents for each vertex~$u_i$ are of the form described in \cref{lem:vertex-gadget}.
\begin{alignat*}{4}
  \intertext{$\forall i \in \{1,2,\dots, n\}\colon$}
&   \Agent~u_i^1\colon&~& u_i^1 \succ u_i^{10} \succ [A^5] \succ u_i^2,&\quad&  \Agent~u_i^{2} \colon&~& u_i^2 \succ u_i^{1}\succ \wred{r_i^2}\succ u_i^3, \\
  & \Agent~u_i^3\colon&~& u_i^3 \succ u_i^{2}   \succ u_i^4,&& \Agent~u_i^{4} \colon&~& u_i^{4}\succ u_i^{3}\succ \wred{r_i^4} \succ  u_i^{5},\\  
  & \Agent~u_i^{5} \colon&~& u_i^{5}\succ u_i^4  \succ u_i^{6}, &&\Agent~u_i^{6} \colon&~& u_i^{6}\succ u_i^5 \succ \wred{r_i^6} \succ  u_i^{7},\\
   &\Agent~u_i^{7} \colon&~& u_i^{7}\succ u_i^6 \succ u_i^{8},  &&\Agent~u_i^{8} \colon&~& u_i^{8}\succ u_i^7 \succ \wred{r_i^8} \succ u_i^{9},\\
  & \Agent~u_i^{9} \colon&~& u_i^{9}\succ u_i^8 \succ u_i^{10},  &&\Agent~u_i^{10} \colon&~& u_i^{10}\succ u_i^9 \succ [B^5] \succ u_i^{1}. 
\end{alignat*}
\noindent Herein, $A^5$ represents the following set~$A^5\coloneqq \{a^5_j\mid 1\le j \le k\}$
and $[A^5]$ denotes the following fixed order of the selector agents in $A^5$: $[A^5]\coloneqq a^5_1 \succ a^5_2\succ \cdots \succ a^5_k$.
The symbols~$B^5$ and $[B^5]$ are defined in an analogous way: 
$B^5\coloneqq \{b^5_j\mid 1\le j \le k\}$ and $[B^5]$ denotes the following fixed order of the selector agents in $B^5$: $[B^5]\coloneqq b^5_1 \succ b^5_2\succ \cdots \succ b^5_k$.

It remains to specify the agents~$r_i^2$, $r_i^4$, $r_i^6$, and $r_i^8$.
Recall that the maximum vertex degree of our input graph~$G$ is three. 
By Vizing's Theorem~\cite{vizing1964estimate}, graph~$G$ is 4-edge-colorable, that is, the edge set of $G$ can be partitioned into four subsets~$E_1$, $E_2$, $E_3$, and $E_4$ which are each a matching.
Moreover, this partition can be computed in polynomial time~\cite{misra1992constructive}.
Thus, we first compute the sets~$E_c$, $1\le c \le 4$.
Next, for each $1\le c \le 4$, if $E_c$ contains an edge~$e=\{u_i,u_j\}$,
then we define $r^{2c}_i:=u^{2c}_j$ (resp.\ $r^{2c}_j\coloneqq u^{2c}_i$).
If some vertex~$u_i$ is not incident to an edge of color~$c$, then $r^{2c}_i$ is omitted, that is, the preference order of agent~$u^{2c}_i$ is just~$u_i^{2c} \succ u_i^{2c -1}  \succ  u_i^{2 c +1}$.

For an illustration, assume that~$e_1\in E_1$ and $e_4\in E_2$ with $e_1=\{u_1, u_2\}$ and $e_4=\{u_1,u_3\}$.
Then, $r_1^2,r_2^2,r_1^4,r_3^4$ are $r_1^2=u_2^2$, $r_2^2=u_1^2$, $r_1^4=u^4_3$, $r_3^4=u_1^4$.

  This completes the construction, which can clearly be performed in polynomial~time.
  Next, we show that our constructed profile is narcissistic, single-peaked, single-crossing, and \tiesc.
 \newcommand{\spo}{\rhd}

\paragraph{Single-peakedness.} 
  The con\-struc\-ted profile is single-peaked with respect to the following linear order~$\spo$:
  \begin{align*}
    [A_k] \spo [A_{k-1}] \spo \cdots \spo [A_1] \spo [B_k] \spo [B_{k-1}] \spo \cdots \\ \cdots \spo [B_1]\spo [U^{10,9}] \spo [U^{1,2}] \spo [U^{3,4}] \spo [U^{5,6}] \spo [U^{7,8}] \text{.}
  \end{align*}
  We specify the notations used in the above order.
  \begin{compactitem}[--]
    \item For each $j \in \{1,2,\dots, k\}$ and for each $(Q,q)\in \{(A,a),(B,b)\}$, the let~$[Q_j]$
          denote the order~$q_j^3\spo q_j^2 \spo q_j^1 \spo q_j^4 \spo q_j^5$ (cf.\ \cref{lem:selector-gadget}).
  \item The symbol~$[U^{10,9}]$ denotes the order~$u^{10}_1\spo u^{10}_2 \spo \cdots \spo u^{10}_n \spo u^{9}_1\spo u^{9}_2 \spo \cdots \spo u^{9}_n$.
  \item For each $c\in \{1,2,3,4\}$ the symbol~$U^{2 c-1, 2 c}$ 
  denotes the subset~$\{u_i^{2c-1}, u_i^{2c}\mid 1\le i \le n\}$.
  Intuitively, the symbol~$[U^{2 c-1, 2 c}]$ denotes an order of the agents in $U^{2 c-1, 2 c}$,
  which makes sure that the preferences of the vertex agents that are ``incident'' to the edges in $E_c$ are single-peaked with respect to this order.
  To define~$[U^{2 c-1,2 c}]$, we need the following additional notion:
  For each edge~$e\in E_c$ let $u_i$ and $u_j$ be the respective endpoints with $i<j$.
 Then, let $[e]$ denote the order~$u^{2 c -1}_i \spo u^{2 c}_i \spo u_j^{2 c-1} \spo u_j^{2 c}$.

 The order $[U^{2 c -1, 2 c}]$ is defined as follows:
 \begin{align*}
  [U^{2 c-1,2 c}] \coloneqq  [e_{c,1}] \spo [e_{c,2}] \spo \cdots \spo [e_{c,\ell}] \spo [R],
 \end{align*}
\noindent where $e_{c,1}, e_{c,2}, \dots, e_{c,\ell}$ is an arbitrary but fixed order of the edges in $E_c$, and $[R]$ denotes a fixed order of the vertex agents~$u_i$ of the form~$u_i^{2 c-1}\succ u_i^{2 c}$ that are not ``incident'' to any edge in $E_c$.
\end{compactitem}
For an illustration, assume that~$E_1=\{e_1,e_2\}$ with $e_1=\{u_1, u_2\}$ and $e_2=\{u_5,u_6\}$.
Then, $[U^{1,2}]$ may be defined as $[U^{1,2}]\coloneqq u_1^{1}\spo u_1^2 \spo u_2^1 \spo u^2_2 \spo u^{1}_5 \spo u^2_5 \spo u^1_6 \spo u^2_6 \spo u^1_3 \spo u^2_3 \spo u_4^1 \spo u^2_4$.

\newcommand{\SCo}{\rhd}

\paragraph{Tie-single-crossingness and single-crossingness.}
\newcommand{\A}[1]{\mathsf{Ac}{#1}}
To show that the constructed preference profile is also \tiesc we consider each pair~$p$ of agents and let $\A(p)$
denote the agents 
which consider both agents in~$p$ as acceptable partners.
We show that either all agents in $\A(p)$ have the same order on $p$
or if two agents exist that order $p$ differently, then $|\A(p)|=2$. 
It is straight-forward to verify that if we can show the above statement, then
the profile is \tiesc, and it is single-crossing by breaking ties in an arbitrary but fixed way.

To this end, let $A^5=\{a_j^5\mid 1\le j \le k\}$, 
$B^5=\{b_j^5\mid 1\le j \le k\}$, 
$U^1=\{u_i^1\mid 1\le i \le n\}$, and $U^{10}=\{u_i^{10} \mid 1\le i \le n\}$.

 \begin{compactenum}
   \item[\textbf{Case 1:~}$\boldsymbol{p=\{a_j^{z},a_{j}^{z'}\}\subseteq A_j}$ (resp.\ $\boldsymbol{p=\{b_j^{z}, b_{j}^{z'}\}\subseteq B_j}$)] for some value~$j\in \{1,2,\dots, k\}$ and some values~$z,z'\in \{1,2,3,4,5\}$.
   This implies that $\A(p)\subseteq A_j$ (resp.\ $A(p)\subseteq B_j$). 
   Moreover, no agent in~$\A(p)$ 
   considers the agents in~$p$ to be tied with each other. 
   If there are two agents in~$\A(p)$ that order this pair differently, then these agents must be those from~$p$.
   It is straightforward to verify that in this case, it holds that $p=\A(p)$, implying that $|\A(p)|=2$. 
   For instance, if $p=\{a^1_i,a_i^5\}$, then $\A(p)=p$.
   Otherwise, $|A(p)|=1$ (For instance, for $p=\{a^2_i, a^5_i\}$, one can check that $\A(p)=\{a^1_i\}$).
   \item[\textbf{Case 2:~}$\boldsymbol{p=\{a_j^{z},a_{j'}^{z'}\}}$ (resp.\ $\boldsymbol{p=\{b_j^{z},b_{j'}^{z'}\}}$)]
   for two distinct values~$j,j'\in \{1,2,\cdots, k\}$ with $j\neq j'$
   and for some values~$z,z'\in \{1,2,3,4,5\}$.
   If $z = z' =  5$, then $\A(p)=U^1$ (resp.\ $\A(p)=U^{10}$); otherwise $\A(p)=\emptyset$.
   For the first case, all agents in $\A(p)$ have the same order on $p$.
   
   \item[\textbf{Case 3:~}$\boldsymbol{p=\{u_i^{j}, u_i^{j'}\} \subseteq U_i}$] for some value~$i\in \{1,2,\dots, n\}$ and some distinct values~$j,j'\in \{1,2,\dots, 10\}$.
   This implies that $\A(p)\subseteq U_i$,
   and no agent in $\A(p)$ considers the agents in $p$ to be tied with each other.
   Analogously, we deduce that if there are two agents in $\A(p)$ that order this pair differently, then they must be the agents in $p$ themselves.
   It is straightforward to verify that in this case we have $p=\A(p)$.
   Otherwise, $|\A(p)|=1$.


   \item[\textbf{Case 4:~}$\boldsymbol{p=\{u_{i}^{j}, u_{i'}^{j'}\}}$] for two distinct values~$i,i'\in \{1,2,\dots, n\}$
   with $i\neq i'$ and some two (possibly identical) values~$j, j'\in \{1,2,\ldots, 10\}$.
   If $j=j'\in \{1,10\}$, then either $\A(p)=A^5$ or $\A(p)=B^5$.
   In both cases, all agents in $\A(p)$ consider the agents in $p$ to be tied with each other.
   Otherwise, if the corresponding two vertices~$v_i$ and $v_{i'}$ are adjacent with $\{v_i,v_{i'}\}\in E_c$ for some color~$c\in \{1,2,3,4\}$
   such that $j=j'=2c$,
   then $\A(p)=p$.
    In the remaining cases for $j$ and $j'$, no agent considers~$u_i^{j}$ and $u^{j'}_{i'}$ both acceptable as partners and thus, we deduce that 
   $\A(p)=\emptyset$.

   \item[\textbf{Case 5:~}$\boldsymbol{p=\{u_i^z, a_{j}^{z'}\}}$ (resp.\ $\boldsymbol{p=\{u_i^z, b_{j}^{z'}\}}$)] for some values~$i,j,z,z'$ with $i \in \{1,2,\dots, n\}$, $j\in \{1,2,\dots, k\}$, 
   $z \in \{1,2,\ldots, 10\}$, and $z'\in \{1,2,\ldots, 5\}$. 
   If $z=1$ (resp.\ $z=10$) and $z'=5$, then by the constructed preferences we deduce that $\A(p)=p$.

   If $z=1$ (resp.\ $z=10$) and $z'\in \{1,4\}$, then by the constructed preferences we deduce that $\A(p)=\{a^5_j\}$ (resp.\ $\A(p)=\{b^5_j\}$).

   If $z\in \{2,10\}$ (resp.\ $z \in \{1,9\}$) and $z'=5$, then we deduce that $\A(p)=\{u^1_i\}$ (resp.\ $\A(p)=\{u^{10}_i\}$).

   Otherwise, by construction, we have that $\A(p)=\emptyset$.
 \end{compactenum}

  \paragraph{Correctness of the construction.}
  We show that~$G$ has a size-$k$ independent set if and only if $\Pot$ admits a stable matching.
  
  For the ``only if'' part, assume that $U'\subseteq U$ is a size-$k$ independent set 
  where $U'=\{u_{q_1},u_{q_2},\ldots, u_{q_k}\}$ with $q_{i-1}< q_{i}$ for all $i\in \{2,3,\dots, k\}$.
  One can verify that the following matching~$M$ is stable.
  \begin{compactitem}[--]
    \item For each $u_{q_j} \in U'$ set $M(a^5_j) = u^1_{q_j}$, $M(b^5_j) = u^{10}_{q_j}$, $M(a^{1}_j)=a^{2}_j$, $M(a^{3}_j)=a^4_j$,  $M(b^{1}_j)=b^{2}_j$, and $M(b^{3}_j)=b^4_j$.
    Note that this in particular defines matchings for all selector agents.
    \item For each $u_i\in U'$ and for each $c\in \{1,2,3,4\}$, set $M(u_i^{2c+1})=u_i^{2c}$.
    \item For each $u_i \notin U'$ and for each $c\in \{1,2,3,4,5\}$ set $M(u^{2c-1}_i) = u^{2c}_i$.
  \end{compactitem}

   Before we go on with the ``if'' part, we observe some properties that each stable matching of our constructed profile must satisfy.
  \begin{myclaim}\label[myclaim]{claim:stable-matching-properties}
    Every stable matching~$M$ for our constructed profile
    satisfies the following two properties.
    \begin{compactenum}[(1)]
      \item \label{prop:vertices-matched}
      For each~$j\in \{1,2,\ldots, k\}$ 
      the selector agent~$b^5_j$ is matched with some vertex agent of the form~$u^{10}_i$ for some $i\in \{1,2,\ldots, n\}$.
      
      \item \label{prop:edges-covered}
      \emph{No} two vertex agents~$u^{10}_i$ and $u^{10}_{i'}$ which are both matched to some selector agents correspond to two adjacent vertices.
    \end{compactenum}
  \end{myclaim}

  \begin{proof}[Proof of \cref{claim:stable-matching-properties}] 
    \renewcommand{\qed}{\hfill~(of
      \cref{claim:stable-matching-properties})~$\diamond$}
    Let $M$~be a stable matching for our profile.
    For the first statement, \cref{lem:selector-gadget}(\ref{lem:e-selector}) immediately implies that 
    for every selector agent~$b^5_j$, $1\le j \le k$, it holds that $M(a^5_j) \in \{u^{10}_i \mid 1\le i \le n\}$. 

    For the second statement, 
    suppose, towards a contradiction, that there are two vertex agents~$u^{10}_i, u^{10}_{i'}$ (for some distinct~$i,i'\in \{1,2,\ldots,n\}$) which are matched to some selector agents such that the corresponding vertices~$u_i$ and $u_{i'}$ are adjacent, 
    i.e., $\{u_i,u_{i'}\}\in E_c$ for some $c\in \{1,2,3,4\}$.
    This means that $\{M(u^{10}_i),M(u^{10}_{i'})\}\subseteq \{b^5_j\mid 1\le j \le k\}$.
    Consequently, by \cref{lem:vertex-gadget}\eqref{lem:e-vertex}, it follows that $M(u_i^{2c})=u_i^{2c+1}$ and $M(u_{i'}^{2c})=u_{i'}^{2c+1}$.
    However, by construction, both $u^{2c}_{i}$ and $u^{2c}_{i'}$ prefer each other to their respective partner, forming a blocking pair for~$M$---a contradiction.
    \qed
  \end{proof}

\noindent  To show the ``if'' part, let $M$ be a stable matching. 
  Then, \cref{claim:stable-matching-properties} immediately implies that the subset~$U'=\{u_i \mid M(u^{10}_i)=b^{5}_j \text{ for some } j \in \{1,2,\cdots, k\}\}$ is an independent set of size~$k$. 
 %
\end{proof}

\section{Conclusion}\label{sec:conclude}
We investigated \SR for preferences with popular structural properties, such as being narcissistic, single-peaked, and single-crossing.
We showed that for complete preferences with ties, assuming narcissism and single-peakedness (or narcissism and single-crossingness) guarantees the existence of stable matchings, which can be found in linear time; by comparison, \SR with complete preferences and ties is NP-complete~\cite{Ronn1990}.
/bin/bash: a: command not found
For incomplete preferences with ties, however, narcissistic and single-peakedness even combined with single-crossingness do not help to lower the computational complexity---\SR remains NP-complete.

We conclude with some challenges for future research. 
First, to better understand the NP-completeness result,
one may study the parameterized complexity with respect to the ``degree'' of incompleteness of the input preferences, such as the number of ties or the number of agents that are in the same equivalence class of the tie-relation. 
We refer to some recent papers on the parameterized complexity of preference-based stable matching
problems~\cite{MarxSchlotter2010,MarxSchlotter2011,MS20,CHSYicalp-par-stable2018,GuRoSaZe2017,MR18,AdGuRoSaZe2018,CheNieSkoECmstable2018,ChenSkowronSorge2019ec-robust,BreHeeKnoNie2019,BreCheKnoLuoNie2020aaai} for this line of research.

Second, we were not able to settle the computational complexity for complete preferences that are also single-peaked and single-crossing. 

Third, for incomplete preferences, we extended the concepts of single-peaked and single-crossing preferences. 
There are, however, further relevant extensions in the literature~\cite{Lack2014,ElkFalLacObr2015,FiHe2016}, which deserve attention in the study of stable matching problems.
Do our results transfer to preferences with these extensions?

Fourth, the algorithm of \citet{BartholdiTrick1986} strongly relies on the following condition:
\begin{quote}
  \emph{  At each step, there are always two agents that consider each other most acceptable.}
\end{quote}
To explore the boundary between tractable and intractable cases it is important to understand which structural properties imply this condition.
For instance, the so-called worst-restricted property (i.e., no three agents exist such that each of them is least preferred by 
some agent)~\cite{Sen1966,SePa1969}
is a generalization of the single-peaked property.
In fact, we can show that for preferences without ties, the narcissistic and worst-restricted properties
are enough to guarantee that there are always two agents which consider each other most acceptable~\cite{Finnendahl16}.
For profiles with ties, however, it seems unclear how to define
a worst-restricted property
so that the question of ``how far one can generalize structured preferences so that the above condition still holds'' remains open.

Finally, we found an issue with Bartholdi III and Trick's algorithm regarding their claimed~$O(n)$ running time~(see \cref{ex:BT-n^2}).
This brings up the question as to whether finding a stable matching can be done faster than in linear time, i.e., faster than $O(n^2)$~time for complete, narcissistic, and single-peaked preference profiles without ties.

\section*{Acknowledgment}
We thank anonymous reviewers of the \emph{5th International Conference on Algorithmic Decision Theory (ADT '17)} and of the journal \emph{Autonomous Agents and Multi-Agent Systems (JAAMAS)} for their constructive feedback.
In particular, we thank an anonymous reviewer of JAAMAS for pointing us to an issue
in the algorithm of Bartholdi III and Trick that also applied to our
algorithms.

Jiehua Chen was partially supported by the People Programme (Marie Curie Actions) of the European Union's Seventh Framework Programme (FP7/2007-2013) under REA grant agreement number {631163.11}, by the Israel Science Foundation (grant no. 551145/14), by European Research Council (ERC) under the European Union’s Horizon 2020 research and innovation programme under grant agreement numbers~677651, and by the WWTF research grant (VRG18-012). 

\end{document}